\newcommand{\Rmnum}[1]{\expandafter\@slowromancap\romannumeral #1@}
\newtheorem{assumption}{Assumption}
\newtheorem{lemma}{Lemma}
\newtheorem{remark}{Remark}
\newtheorem{theorem}{Theorem}
\newtheorem{definition}{Definition}
\newtheorem{corollary}{Corollary}
\newcounter{ltisystem}
\newcommand{\spr}{{\rm spr }}
\journal{Journal of the Franklin Institute}
\begin{document}

\begin{frontmatter}



\title{Finite Sample Analysis of System Poles for Ho-Kalman Algorithm}


\author[1]{Shuai Sun\corref{cor1}%
}
\ead{suns19@mails.tsinghua.edu.cn}
\author[2]{Xu Wang}
\ead{worm@mail.ustc.edu.cn}
\cortext[cor1]{Corresponding author}

\affiliation[1]{organization={Department of Automation, Tsinghua University},
city={Beijing},
postcode={100084},
country={China}}
\affiliation[2]{organization={School of Computer Science and Technology, University of Science and Technology of China},
city={Hefei,Anhui},
postcode={230026},
country={China}}

\begin{abstract}
The Ho-Kalman algorithm has been widely employed for the identification of discrete-time linear time-invariant (LTI) systems. In this paper, we investigate the pole estimation error for the Ho-Kalman algorithm based on finite input/output sample data. Building upon prior works, we derive finite sample error bounds for system pole estimation in both single-trajectory and multiple-trajectory scenarios. Specifically, we prove that, with high probability, the estimation error for an $n$-dimensional system decreases at a rate of at least $\mathcal{O}(T^{-1/2n})$ in the single-trajectory setting with trajectory length $T$, and at a rate of at least $\mathcal{O}(N^{-1/2n})$ in the multiple-trajectory setting with $N$ independent trajectories. Furthermore, we reveal that in both settings, achieving a constant estimation error requires a super-polynomial sample size in $ \max\{n/m, n/p\} $, where $n/m$ and $n/p$ denote the state-to-output and state-to-input dimension ratios, respectively. Finally, numerical experiments are conducted to validate the non-asymptotic results of system pole estimation.
\end{abstract}



\begin{keyword}
System identification, Ho-Kalman algorithm, System pole, Finite sample error analysis
\end{keyword}

\end{frontmatter}

\section{Introduction}
\label{sec:introduction}

Linear time-invariant (LTI) system theory is widely used in finance, biology, robotics, and other engineering fields~\cite{ljung1986system}. 
Many modern control techniques depend on the availability of a reasonably accurate LTI state-space model of the system to be controlled~\cite{oymak2021revisiting}. 
To this end, identifying the state-space model of the following multiple-input multiple-output (MIMO) LTI system has become one of the core problems in system analysis, design and control~\cite{michel1992}:
\begin{equation}\label{linear_system}
	\begin{aligned}
		x_{k+1} &= A x_{k}+ Bu_k + 
		w_{k}, \\
		y_{k} &= 
		C x_{k}+ Du_k +  v_{k},
    \end{aligned}
    \tag{\textsf{LTI-system}}
\end{equation}
where $x_k \in \mathbb{R}^n$, $u_k \in \mathbb{R}^p$, $y_k \in \mathbb{R}^m$ are the system state, the input and the output, respectively; $w_k \in \mathbb{R}^n$ and $v_k \in \mathbb{R}^m$ denote the process and measurement noise.

Numerous of classical works on the identification of LTI systems~\cite{kodde1990asymptotic,gibson2000least,deistler1995consistency,viberg1997analysis} mainly focused on the asymptotic properties
of certain estimation methods~\cite{goodwin1977dynamic,qin2006overview}, such as consistency~\cite{jansson1998consistency}, asymptotic normality~\cite{west1988asymptotic}, and variance analysis~\cite{jansson2000asymptotic}.
However, in real-world applications, only a finite amount of sample data is available. Consequently, obtaining sharp error bounds under finite-sample conditions is crucial for designing high-performance robust control systems. Motivated by this, recent research in both academia and industry has shifted its focus toward non-asymptotic analysis, which studies system identification under finite-sample settings using statistical tools~\cite{matni2019tutorial,matni2019self,dean2020sample}.
Especially, driven by advances in statistical learning and high-dimensional probability~\cite{vershynin2018high,wainwright2019high}, an extensive body of work~\cite{faradonbeh2018finite,simchowitz2018learning,simchowitz2019learning,matni2019tutorial,matni2019self,sarkar2019near,tsiamis2019finite,tsiamis2021linear,tsiamis2022learning,tsiamis2022online,jedra2019sample,jedra2020finite,jedra2022finite,dean2020sample,sun2020finite,zheng2020non,oymak2021revisiting,wang2022large,he2024weighted,chatzikiriakos2024sample,wagenmaker2020active,sun2025non} has emerged to address the parameter identification problem for LTI systems under finite-sample conditions.
Moreover, a comprehensive survey of developments in non-asymptotic system identification can be found in~\cite{tsiamis2023statistical}.

The system described in \eqref{linear_system} can be generally categorized as either \emph{fully observed} or \emph{partially observed} system, depending on the availability of direct state measurements.
When the system state can be measured accurately, i.e., \( C = \mathbf{I}, D = \mathbf{0}\) and $v_k \equiv \mathbf{0}$, the \eqref{linear_system} is known as a \textit{fully observed} LTI system. A series of recent studies~\cite{faradonbeh2018finite,simchowitz2018learning,sarkar2019near,matni2019tutorial,wagenmaker2020active,dean2020sample,jedra2020finite} has established finite-sample upper bounds on the identification error of the system matrices \((A, B)\) under ordinary least-squares (OLS) estimation. 
Beyond OLS, Jedra and Proutiere~\cite{jedra2019sample,jedra2022finite} derived lower bounds on the sample complexity in the probably approximately correct (PAC) framework~\cite{valiant2013probably} for a class of locally stable algorithms. Tsiamis and Pappas~\cite{tsiamis2021linear} further showed that in under-actuated and under-excited systems with certain structural constraints, the worst-case sample complexity grows exponentially with the system dimension. Chatzikiriakos and Iannelli~\cite{chatzikiriakos2024sample} established both upper and lower bounds on the sample complexity for identifying the system matrices \((A, B)\) from a finite set of candidates, even without stability assumptions.

When  \( C \neq \mathbf{I} \), the \eqref{linear_system} is a  \textit{partially observed} LTI system, where identification becomes more challenging due to the inability to accurately measure the system state~\cite{zheng2021sample}. Recent works~\cite{oymak2021revisiting,simchowitz2019learning,zheng2020non,wang2022large,he2024weighted,tsiamis2019finite,sun2025non} established high-probability bounds and convergence rates for identifying the state-space realization \( (A, B, C, D) \) under finite-sample conditions (up to similarity), achieving rates of \( \mathcal{O}(1/\sqrt{T}) \) or \( \mathcal{O}(1/\sqrt{N}) \) (up to logarithmic factors), where \( T \) denotes the length of a single trajectory and \( N \) denotes the number of independent sample trajectories. Among these, works~\cite{tsiamis2019finite,sun2025non} employed subspace identification methods (SIM), which extract relevant system subspaces via projections and regressions, followed by realization reconstruction. In contrast, another line of work~\cite{oymak2021revisiting,simchowitz2019learning,zheng2020non,wang2022large,he2024weighted} followed a two-step procedure:
\textbf{\romannumeral1}) estimating Markov parameters using OLS;
\textbf{\romannumeral2}) recovering a balanced realization via the Ho-Kalman algorithm.
Other approaches based on reinforcement learning~\cite{lale2020logarithmic} and gradient descent~\cite{hardt2018gradient} have also been proposed for finite-sample state-space identification.

Moreover, system poles also play a central role in control design, including pole placement~\cite{brasch1970pole}, $\mathcal{H}_\infty$ loop shaping~\cite{papageorgiou1999h}, and the root locus method~\cite{evans1950control}. They critically govern the stability and dynamic response of the closed-loop system. 
However, little attention has been paid to the estimation error of \emph{system poles}, especially for Ho-Kalman algorithm. This paper fills this gap by providing, to the best of our knowledge, the first finite-sample analysis of system pole estimation via the Ho-Kalman algorithm.

Our main contributions are summarized as follows: 
\begin{itemize}
	\item For the \textbf{single-trajectory} setting, building upon the results of Oymak and Ozay~\cite{oymak2021revisiting}, we derive finite-sample bounds for pole estimation. Specifically, we show that the error decays at a rate of at least $\mathcal{O}(T^{-1/2n})$ with high probability, where $n$ is the system order and $T$ the trajectory length.
    
    \item For the \textbf{multiple-trajectory} setting, we derive similar error bounds for pole estimation by extending the results of Zheng and Li~\cite{zheng2020non}. Specifically, we show that the error decays at least as rate $\mathcal{O}(N^{-1/2n})$ with high probability, where $N$ is the number of independent trajectories.

    \item For both settings, we show that achieving a constant estimation error of the system poles requires a \textbf{super-polynomial} sample size in \( \max\{n/m, n/p\} \), where \( m \) and \( p \) are the output and input dimensions, respectively.
\end{itemize}

This paper is organized as follows.
Section~\ref{sec:problem setup} formulates the identification problem and reviews OLS and Ho-Kalman algorithm. Section~\ref{sec:main results} analyzes the finite sample identification error of the system poles for the Ho-Kalman algorithm.
Finally, Section~\ref{sec:numerical results} provides numerical simulations.

\textbf{Notations}: 
$\mathbf{0}$ is an all-zero matrix. For any $ x \in \mathbb{R} $, $ \lfloor x \rfloor $ denotes the largest integer not exceeding $ x $, and $ \lceil x \rceil $ denotes the smallest integer not less than $ x $.
The Frobenius norm is denoted by $ \|A\|_{\rm F} $, and $ \|A\| $ is the spectral norm of $A$, i.e., its largest singular value $ \sigma_{\max}(A) $. 
Its $ j $-th largest singular value is denoted by $ \sigma_{j}(A) $, and its smallest non-zero singular value  is denoted by $ \sigma_{\min}(A) $.
The spectrum of a square matrix $ A $ is denoted by $ \lambda(A) $.
The spectral radius of a matrix $A$, denoted by $\spr(A)$, is defined as the maximum absolute value of its eigenvalues.
 The Moore-Penrose inverse of matrix $ A $ is denoted by $ A^\dagger $. 
The $ n \times n $ identity matrix is denoted by $\mathbf{I}_n$, and the identity matrix of appropriate dimension is denoted by $\mathbf{I}$.
The matrix inequality $ A \succ B$ implies that matrix $ A -B $ is positive-definite. 
Multivariate Gaussian distribution with mean $ \mu $ and covariance $ \Sigma $ is denoted by $ \mathcal{N}(\mu,\Sigma) $. 
The big-$\mathcal{O}$ notation $\mathcal{O}(f(n))$ represents a function that grows at most as fast as $f(n)$.
The big-$\Omega$ notation $\Omega(f(n))$ represents a function that grows at least as fast as $f(n)$.
The big-$\Theta$ notation $\Theta(f(n))$ represents a function that grows as fast as $f(n)$.

For readability, the proofs of all results are provided in the Appendix.

\section{Problem Setup}
\label{sec:problem setup}
We consider the identification problem for the \eqref{linear_system} based on finite input/output sample data.
The notations  $A \in \mathbb{R}^{n\times n}$, $B \in \mathbb{R}^{n\times p}$, $C \in \mathbb{R}^{m\times n}$, and $D \in \mathbb{R}^{m\times p}$ are \textbf{unknown} matrices. 

\begin{assumption}\label{observable and controllable}
	For the \eqref{linear_system}, the pair $(A,C)$ is observable, and the pair $(A, B)$ is controllable.
\end{assumption}

\begin{remark}
	Indeed,  the \eqref{linear_system} is minimal in the sense of Assumption~\ref{observable and controllable}, i.e, the state-space realization $(A,B,C,D)$ of the \eqref{linear_system} has the smallest dimension among all state-space realizations with the same input-output relationship. On the other hand, it is worth noting that only the observable part of the system can be identified, and the controllability assumption of the system can ensure that all modes can be excited by the external input $u_k$. Therefore, Assumption~\ref{observable and controllable} is necessary and well-defined.
\end{remark}

\begin{assumption}\label{system order is known}
	\begin{enumerate}
	    \item The order $n$ of the \eqref{linear_system} is known.
        \item For each trajectory, the initial condition of the state $x_0 = \mathbf{0}$ is known.
        \item The process noise $w_k \sim \mathcal{N}(\mathbf{0}, \sigma_w^2 \mathbf{I}_n)$ and the measurement noise $v_k \sim \mathcal{N}(\mathbf{0}, \sigma_v^2 \mathbf{I}_m)$ are independent and identically distributed (i.i.d.) across time $k$, with $\sigma_w, \sigma_v > 0$. 
        \item The input is Gaussian, i.e., $u_k \sim  \mathcal{N}(\mathbf{0},\sigma_u^2 \mathbf{I}_p)$ with $\sigma_u > 0$. The process noise, measurement noise, and input are mutually independent.
	\end{enumerate}
\end{assumption}

\begin{remark}
	It is noteworthy that the input/output sample data correspond to an infinite number of equivalent state-space realizations, which are related through similarity transformations. 
    Consequently, this paper focuses on pole estimation, as the poles are invariant under similarity transformations. Their importance in control design has been discussed in the introduction.
\end{remark}

The goal of this paper is to provide a \textbf{finite-sample analysis of system pole estimation}, namely the eigenvalues of matrix $A$ with the Ho-Kalman algorithm.

\subsection{Least-squares procedure}

The least-squares procedure is used to recover Markov parameters from finite input/output sample data, which subsequently enables the recovery of a balanced state-space realization via the Ho-Kalman algorithm. This can be done in two settings:
\begin{itemize}
    \item \textbf{Single-trajectory}: using $\{(u_k, y_k) \mid 0 \leq k \leq T-1\}$, where $T$ denotes the trajectory length.
    \item \textbf{Multiple-trajectory}: using $\{(u_k^{(i)}, y_k^{(i)}) \mid 0 \leq k \leq K-1,\ 1 \leq i \leq N\}$, where $K$ and $N$ denote the trajectory length and number of trajectories, respectively.
\end{itemize}
In this subsection, we briefly review these two least-squares approaches for estimating Markov parameters.

We first define the matrix $G$, consisting of the first $K$ Markov parameters:
\begin{equation}
    G \triangleq \begin{bmatrix}
		D & CB & CAB &\cdots & CA^{K-2}B
	\end{bmatrix} \in \mathbb{R}^{m \times Kp},
\end{equation}
which can be estimated via the least-squares procedure based on finite input/output sample data. The resulting estimate $\widehat{G}$ is then used to construct a balanced state-space realization using the Ho-Kalman algorithm.

Now, we define the matrix \( F \) as 
\begin{equation}\label{F}
	F \triangleq  \begin{bmatrix}
		\mathbf{0}_{m \times n} & C & CA  & \cdots & CA^{K-2}
	\end{bmatrix} \in \mathbb{R}^{m \times Kn},
\end{equation}
which will be used in the analysis presented in Section~\ref{sec:main results}.

\subsubsection{Single-trajectory setting}
The least-squares procedure introduced next is from~\cite{oymak2021revisiting}. Given the single-trajectory $\{(u_k,y_k)\mid 0 \leq k \leq T-1\}$, we generate $\overline{T}$ subsequences of length $K$, where $T = K + \overline{T} -1$ and $\overline{T} \geq 1$.

Define the stacked input and process noise vectors as
\begin{equation}
    \overline{u}_k \triangleq  \begin{bmatrix}
		u_{k} \\ u_{k-1} \\ \vdots \\ u_{k-K+1}
	\end{bmatrix} \in \mathbb{R}^{Kp}, \quad
	\overline{w}_k \triangleq  \begin{bmatrix}
		w_{k} \\ w_{k-1} \\ \vdots \\ w_{k-K+1}
	\end{bmatrix} \in \mathbb{R}^{Kn}.
\end{equation}
Let the output matrix \( Y \) and input matrix \( U \) be defined as
\begin{align}
    Y &\triangleq \begin{bmatrix}
		y_{K-1} & y_{K} & \cdots & y_{T-1}
	\end{bmatrix}^\top \in \mathbb{R}^{\overline{T} \times m}, \\
    U &\triangleq \begin{bmatrix}
		\overline{u}_{K-1} & \overline{u}_{K} & \cdots &\overline{u}_{T-1}
	\end{bmatrix}^\top \in \mathbb{R}^{\overline{T} \times Kp}.
\end{align}
The estimation of $G$ can be derived from solving the least-squares problem
\begin{equation}
    \widehat{G} = \arg \min_{X \in \mathbb{R}^{m \times K p}} \| Y - U X^\top \|_{\rm F}^2,
\end{equation}
with closed-form solution \( \widehat{G} = (U^\dagger Y)^\top \), where \( U^\dagger = (U^\top U)^{-1} U^\top \) is the left pseudo-inverse of the matrix \( U \).

\subsubsection{Multiple-trajectory setting}
The least-squares procedure introduced next is from~\cite{zheng2020non}. For each sample trajectory $i$, $i = 1,2,\cdots,N$, define the input/output data as
\begin{equation}
    y^{(i)} \triangleq \begin{bmatrix}
		y_0^{(i)}   & \cdots & y_{K-1}^{(i)} 
	\end{bmatrix} \in \mathbb{R}^{m \times K},
\end{equation}
and the structured input matrix
\begin{equation}
    U^{(i)} \triangleq \begin{bmatrix}
		u_0^{(i)}   & u_1^{(i)} & \cdots & u_{K-1}^{(i)}  \\
		& u_0^{(i)} & \cdots & u_{K-2}^{(i)}  \\
		& & \ddots & \vdots \\
		& & & u_0^{(i)} 
	\end{bmatrix} \in \mathbb{R}^{pK \times K}.
\end{equation}
Define the output matrix \( Y \) and input matrix \( U \) as 
\begin{align}
    Y &\triangleq \begin{bmatrix}
		y^{(1)} & \cdots & y^{(N)}
	\end{bmatrix} \in \mathbb{R}^{m \times NK}, \\ 
    U &\triangleq \begin{bmatrix}
		U^{(1)} & \cdots & U^{(N)}
	\end{bmatrix} \in \mathbb{R}^{pK \times NK}.
\end{align}
The estimation of $G$ can be derived from solving the least-squares problem
\begin{equation}\label{G_multi}
	\widehat{G} = \arg \min_{X \in \mathbb{R}^{m \times K p}} \| Y - XU \|_{\rm F}^2,
\end{equation}
with closed-form solution \( \widehat{G} = YU^\dagger  \), where \( U^\dagger = U^\top(UU^\top)^{-1}  \) is the right pseudo-inverse of the matrix \( U \).

\subsection{Ho-Kalman algorithm}
\label{subsec:ho-kalman}
In this subsection, we introduce the Ho-Kalman algorithm~\cite{ho1966effective,oymak2021revisiting} for constructing a state-space realization $(A,B,C,D)$ from the Markov parameter matrix $G$ or its estimate $\widehat{G}$.

Before continuing on, we first define the (extended) observability, controllability and the Hankel matrices $O \in \mathbb{R}^{K_1m \times n}$, $Q  \in \mathbb{R}^{n \times (K_2+1)p}$ and $H  \in \mathbb{R}^{K_1m \times (K_2+1)p}$ of the \eqref{linear_system} as
\begin{align} \label{O}
	O &\triangleq \begin{bmatrix}
		C^\top & (CA)^\top & \cdots & (CA^{K_1-1})^\top
	\end{bmatrix}^\top, \\  \label{Q}
	Q &\triangleq \begin{bmatrix}
		B & AB& \cdots & A^{K_2}B
	\end{bmatrix},\\ \label{H}
	H &\triangleq \begin{bmatrix}
		CB & CAB &\cdots & CA^{K_2}B\\
		CAB & CA^2B &\cdots & CA^{K_2+1}B\\
		\vdots &  \vdots &\ddots &\vdots \\
		CA^{K_1-1}B & CA^{K_1}B &\cdots  &CA^{K_1+K_2-1}B
	\end{bmatrix} = OQ,
\end{align}
where $K = K_1 + K_2 + 1$, and $K_1, K_2 \geq n$. 

We follow the Ho-Kalman algorithm applied to the estimated Markov parameter matrix $\widehat{G}$ as described in~\cite{oymak2021revisiting}, and summarize its main steps below.

\textbf{Step 1}: Construct the estimated Hankel matrix $\widehat{H}$ from $\widehat{G}$. Next, $\widehat{H}$ denotes the estimated value of the Hankel matrix $H$, and similarly for the others. Let $ \widehat{H}^+ $ and $ \widehat{H}^- $ be the sub-matrices of $ \widehat{H} $ discarding the left-most and right-most $ mK_1 \times p $ block respectively. Denote $\widehat{L} \in \mathbb{R}^{K_1m \times K_2p}$ as the best rank-$n$ approximation of $\widehat{H}^{-}$.

\textbf{Step 2}: Perform the singular value decomposition (SVD):
\begin{equation}
    \widehat{L} = \mathcal{U}_1 \Sigma_1 \mathcal{V}^\top_1,
\end{equation}
where $\Sigma_1$ contains the top $n$ singular values in descending order, and $\mathcal{U}_1$ and $\mathcal{V}_1$ are matrices composed of corresponding left and right singular vectors, respectively. 

\textbf{Step 3}: Estimate the (extended) observability and controllability matrices:
\begin{equation}
    \widehat{O} = \mathcal{U}_1 \Sigma_1^{1/2}, \quad \widehat{Q} = \Sigma_1^{1/2}\mathcal{V}_1^\top.
\end{equation}

\textbf{Step 4}: Recover the estimated system matrices:
\begin{equation}
    \begin{aligned}
		\widehat{D} &= \widehat{G}(:,1:p), \quad
		\widehat{C} = \widehat{O}(1:m,:), \\
		\widehat{B} &= \widehat{Q}(:,1:p), \quad
		\widehat{A} = \widehat{O}^\dagger \widehat{H}^{+} \widehat{Q}^\dagger,
	\end{aligned}
\end{equation}
where $\widehat{G}(:,1:p)$ is the left-most $ m \times p $ block of $ \widehat{G}$, and $\widehat{O}(1:m,:)$ is the top-most $ m \times n $ block of $ \widehat{O}$, and $\widehat{Q}(:,1:p)$ is the left-most $ n \times p $ block of $ \widehat{Q}$.

The same procedure applied to the true Markov matrix $G$ recovers a true balanced state-space realization of the~\eqref{linear_system}.

\begin{remark}
	Note that $D$ is a submatrix of $G$. Consequently,
	\begin{equation}
	    \| D - \widehat{D} \| \leq \| G - \widehat{G}\|.
	\end{equation}
	Thus, in the following, it suffices to focus on the identification of the matrices $A$, $B$, and $C$.
\end{remark}

\section{Finite Sample Analysis of System Poles}
\label{sec:main results}

Oymak and Ozay \cite{oymak2021revisiting} and Zheng and Li~\cite{zheng2020non} establish high-probability upper bounds for the identification error of state-space realizations in the single-trajectory and multiple-trajectory settings, respectively. However, the corresponding error analysis for system poles, namely the eigenvalues of $A$, using the Ho-Kalman algorithm remains lacking. Therefore, in the following, we aim to address this gap.

\subsection{Single-trajectory setting}

In this subsection, we provide finite-sample high-probability upper bounds for pole estimation using the single-trajectory $\{(u_k,y_k)\mid 0 \leq k \leq T-1\}$, based on Oymak and Ozay~\cite{oymak2021revisiting}. Further, we analyze the sample complexity required to achieve a constant estimation error.

Before proceeding, we introduce some notation adopted from~\cite{oymak2021revisiting}. Define $\Phi(A)$ as the ratio between the exponents of the spectral norm and the square-root of the spectral radius of $A$:
\begin{equation}
    \Phi(A) \triangleq \sup_{\tau \geq 0} \frac{\|A^\tau\|}{\spr(A)^{\tau/2}}.
\end{equation}
We further define $\Gamma_\infty$ as the steady-state covariance matrix of the state sequence $x_k$, namely,
\begin{equation}
    \Gamma_\infty \triangleq \sum_{i=0}^\infty \sigma_w^2 A^i (A^\top)^i + \sigma_u^2 A^i B B^\top (A^\top)^i.
\end{equation}
Finally, we introduce $\sigma_e$, which is given by
\begin{equation}
    \sigma_e \triangleq \Phi(A) \| C A^{K-1}\| \sqrt{\frac{K \| \Gamma_\infty\|}{1-\spr(A)^K}}.
\end{equation}

\begin{theorem}[Theorem V.3 in~\cite{oymak2021revisiting}]\label{state-space realization of single sample trajectory}
	For the \eqref{linear_system}, under the conditions of Assumption \ref{observable and controllable} and \ref{system order is known}, suppose $\sigma_u = 1$ and $\sigma_v$, $\sigma_e$, $\sigma_w \|F\|$ are bounded by constants, and $\spr(A)^K \leq 0.99$. Let $H$ and $\widehat{H}$ be the Hankel matrices constructed from the Markov parameter matrices $G$ and $\widehat{G}$ with the single-trajectory $\{(u_k,y_k)\mid 0 \leq k \leq T-1\}$, respectively, via~\eqref{H}. Let $\overline{A}, \overline{B}, \overline{C}$ be the the state-space realization corresponding to the output of Ho–Kalman algorithm with input $G$ and $\widehat{A}, \widehat{B}, \widehat{C}$ be the state-space realization corresponding to output of Ho-Kalman algorithm with input $\widehat{G}$.
	Let $\overline{T}_0 = Kq\log^2(Kq)$, where $q = p+m +n$. Suppose\footnote{Since $ L $ is the best rank-$n$ approximation of $H^{-}$, this means that $\sigma_{n}(H^-) = \sigma_{n}(L)$, thus here we can use $ \sigma_{n}(H^-) $ instead of $\sigma_{n}(L)$ in~\cite{oymak2021revisiting}.} $\sigma_{n}(H^-) > 0 $ and perturbation obeys
	\begin{equation}
	    \|L-\widehat{L}\| \leq \frac{ \sigma_{n}(H^-)}{2},
	\end{equation}
	and sample size parameter $\overline{T}$ obeys
	\begin{equation}
	    \frac{\overline{T}}{\log^2 (\overline{T}q)} \sim \Omega \left( \frac{K\overline{T}_0}{\sigma_{n}^2(H^-)} \right).
	\end{equation}
	Then, with high probability (same as Theorem 3.1 in~\cite{oymak2021revisiting}), there exists a unitary matrix $\mathcal{U} \in \mathbb{R}^{n \times n}$, such that
    \begin{multline}
        \max \left\{\|\overline{C}-\widehat{C} \mathcal{U}\|_{\rm F}, \|O-\widehat{O} \mathcal{U}\|_{\rm F}, \|\overline{B}-\mathcal{U}^\top \widehat{B} \|_{\rm F}, \right. \\
		\left.	\| Q-\mathcal{U}^\top \widehat{Q} \|_{\rm F} \right\} \leq \frac{\sqrt{\mathcal{C}nK}\log(\overline{T}q)}{\sqrt{\sigma_{n}(H^-)}} \sqrt{\frac{\overline{T}_0}{\overline{T}}},
    \end{multline}
	where $\mathcal{C}$ is a constant.
	Furthermore, hidden state matrices $\widehat{A}, \overline{A}$ satisfy
	\begin{equation}
	    \|\overline{A}-\mathcal{U}^\top \widehat{A} \mathcal{U}\|_{\rm F} \leq \frac{\mathcal{C}\sqrt{nK}\log(\overline{T}q)\|H\|}{\sigma_{n}^2(H^-)}  \sqrt{\frac{\overline{T}_0}{\overline{T}}}.
	\end{equation}
\end{theorem}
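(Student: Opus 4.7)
The plan is to first obtain a high-probability bound on $\|\widehat{G}-G\|$. Writing the single-trajectory regression in matrix form gives $Y = U G^\top + W F^\top + V$, where $W$ stacks the process noise and $V$ the measurement noise, so the OLS residual factors as $\widehat{G}^\top-G^\top=(U^\top U)^{-1}U^\top(W F^\top+V)$. The main difficulty is that the rows of $U$ are not independent because consecutive subsequences overlap; I would lower-bound $\sigma_{\min}(U^\top U)$ with high probability via a small-ball/block-martingale argument tailored to Gaussian linear dynamics, and upper-bound $\|U^\top(W F^\top+V)\|$ by a Hanson-Wright-type concentration inequality in which the parameters $\|F\|$, $\sigma_e$, $\sigma_v$ and $\sigma_w$ of the hypotheses enter explicitly. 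This step is where most of the structural assumptions (sample size $\overline{T}\gtrsim K\overline{T}_0/\sigma_n^2(H^-)$, stability $\spr(A)^K\leq 0.99$, boundedness of $\sigma_v,\sigma_e,\sigma_w\|F\|$) are consumed, and it is the main technical obstacle.

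\textbf{Step 2 — From $\widehat{G}$ to $\widehat{H}$ and $\widehat{L}$.} Since $H$ is a block Hankel matrix whose blocks are contiguous submatrices of $G$, a blockwise bookkeeping yields $\|H^\pm-\widehat{H}^\pm\|\lesssim \sqrt{K_1}\,\|G-\widehat{G}\|$. Under Assumption~\ref{observable and controllable} the matrix $H^-$ has rank exactly $n$, so $L=H^-$ and the standard two-sided inequality $\|L-\widehat{L}\|\leq 2\|H^--\widehat{H}^-\|$ applies. Together with Step~1 and the sample-size hypothesis, this also verifies the perturbation condition $\|L-\widehat{L}\|\leq \sigma_n(H^-)/2$ required in the statement.

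\textbf{Step 3 — SVD perturbation for $O,Q,C,B$.} Given the perturbation on $\widehat{L}$ and the spectral gap $\sigma_n(H^-)-\sigma_{n+1}(H^-)=\sigma_n(H^-)$, I would apply Wedin's theorem to the top-$n$ singular subspaces of $\widehat{L}$ to produce a unitary $\mathcal{U}\in\mathbb{R}^{n\times n}$ aligning $\widehat{O}=\mathcal{U}_1\Sigma_1^{1/2}$ with $O$ and $\widehat{Q}=\Sigma_1^{1/2}\mathcal{V}_1^\top$ with $Q$ up to a spectral error of order $\|L-\widehat{L}\|/\sqrt{\sigma_n(H^-)}$; the extra $\sqrt{nK}$ factor converts this to the stated Frobenius bound. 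The inequalities for $\widehat{C}$ and $\widehat{B}$ follow immediately because they are the top $m$ rows of $\widehat{O}$ and the left $p$ columns of $\widehat{Q}$, respectively.

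\textbf{Step 4 — Propagation to $\widehat{A}$.} For the dynamics matrix I would combine $\widehat{A}=\widehat{O}^\dagger\widehat{H}^+\widehat{Q}^\dagger$ with $\overline{A}=O^\dagger H^+ Q^\dagger$ and expand the difference into three telescoping terms, one for the perturbation of each factor, after inserting $\mathcal{U}$ and $\mathcal{U}^\top$ at the appropriate places so that the final bound takes the balanced form $\|\overline{A}-\mathcal{U}^\top\widehat{A}\mathcal{U}\|_{\rm F}$. Using $\|\widehat{O}^\dagger\|,\|\widehat{Q}^\dagger\|\lesssim 1/\sqrt{\sigma_n(H^-)}$ from Step~3 and $\|H^+\|\leq\|H\|$, each telescoping term is bounded by the product of a pseudoinverse norm and a Frobenius perturbation from Step~3; the two factors of $1/\sqrt{\sigma_n(H^-)}$ multiply into the $1/\sigma_n^2(H^-)$ factor, and $\|H\|$ enters through the middle term. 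The delicate point here is not the magnitudes but the alignment: keeping a single $\mathcal{U}$ consistent across the three terms so that only one similarity transform appears on the right-hand side.
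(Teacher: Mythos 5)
First, note that the paper does not prove this statement at all: it is imported verbatim as Theorem V.3 of~\cite{oymak2021revisiting} (the only original content is the footnote observing that $\sigma_n(L)=\sigma_n(H^-)$ because $L$ is the best rank-$n$ approximation of the rank-$n$ matrix $H^-$). So the relevant comparison is with the proof in that reference, whose architecture your Steps 2--4 reproduce faithfully: Hankel bookkeeping ($\|H^\pm-\widehat H^\pm\|\le\sqrt{\min\{K_1,K_2+1\}}\,\|G-\widehat G\|$, $\|L-\widehat L\|\le 2\|H^--\widehat H^-\|$, cf.\ Lemma~\ref{lemma_1}), singular-subspace perturbation with gap $\sigma_n(H^-)$ to align $\widehat O,\widehat Q$ (and hence $\widehat C,\widehat B$) through a single unitary $\mathcal U$, and a telescoping pseudo-inverse argument for $\widehat A=\widehat O^\dagger\widehat H^+\widehat Q^\dagger$ giving the $\|H\|/\sigma_n^2(H^-)$ factor (cf.\ Theorem~\ref{lemma_2}). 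At that level your plan is the same route as the source.

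The genuine gap is in Step 1, which is both the entire quantitative content of the theorem and the place where your setup is wrong as written. The regression model for the single trajectory is not $Y=UG^\top+WF^\top+V$: since $y_k=G\overline u_k+F\overline w_k+CA^{K}x_{k-K}+v_k$ (with the paper's $F$ containing a leading zero block), there is a third residual term coming from inputs and process noise older than $K$ steps. This truncation term is precisely why $\sigma_e$ (built from $\|CA^{K-1}\|$, $\Gamma_\infty$, and $1-\spr(A)^K$) and the hypothesis $\spr(A)^K\le 0.99$ appear in the statement; in your decomposition these assumptions have nothing to act on, and attributing $\sigma_e$ to a Hanson--Wright bound on $U^\top(WF^\top+V)$ misplaces it. Handling this term --- it is correlated with the rows of $U$ and across time --- together with the dependence of the overlapping rows of $U$ is exactly the hard part of Theorem 3.1 in~\cite{oymak2021revisiting}, and your proposal only gestures at it (``I would lower-bound \dots via a small-ball/block-martingale argument''). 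Since the advertised rate $\sqrt{\overline T_0/\overline T}$, the $\log(\overline Tq)$ factors, and the sample-size threshold $\overline T/\log^2(\overline Tq)\gtrsim K\overline T_0/\sigma_n^2(H^-)$ all come out of that analysis, the proposal as it stands establishes the reduction (Steps 2--4) but not the theorem; to complete it you must either carry out the dependent-regression concentration with the truncation term included or, as the paper does, simply cite Theorem 3.1/V.3 of~\cite{oymak2021revisiting}.
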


\begin{remark}
    Oymak and Ozay~\cite{oymak2021revisiting} point out that, with high probability, the finite sample estimation errors of the state-space realization decay at a rate of at least $1/\sqrt{\overline{T}}$.
\end{remark}

To characterize the gap between the spectra of matrices $\overline{A}$ and $\widehat{A}$, we first introduce the Hausdorff distance~\cite{hausdorff1914grundzuge} as follows.

\begin{definition}[Hausdorff Distance~\cite{hausdorff1914grundzuge}]
	Given $\mathcal{A} = (\alpha_{ij}) \in \mathbb{C}^{n \times n}$ and $\mathcal{B} = (\beta_{ij}) \in \mathbb{C}^{n \times n}$, suppose that $\lambda(\mathcal{A}) =  \{\lambda_1(\mathcal{A}), \cdots, \lambda_n(\mathcal{A})\}$ and $\lambda(\mathcal{B}) = \{\mu_1(\mathcal{B}), \cdots, \mu_n(\mathcal{B})\}$ are the spectra of matrix $\mathcal{A}$ and $\mathcal{B}$ respectively, then 
	\begin{equation}
	    d_{\rm H}(\mathcal{A},\mathcal{B}) \triangleq \max\{{\rm sv}_{\mathcal{A}}(\mathcal{B}), {\rm sv}_{\mathcal{B}}(\mathcal{A})\}
	\end{equation}
	is defined as the Hausdorff distance between the spectra of matrix $\mathcal{A}$ and $\mathcal{B}$, where 
	\begin{equation}
	    {\rm sv}_{\mathcal{A}}(\mathcal{B}) \triangleq \max_{1\leq j\leq n} \min_{1\leq i\leq n} | \lambda_i(\mathcal{A})- \mu_j(\mathcal{B})|
	\end{equation}
	is the spectrum variation of $\mathcal{B}$ with respect to $\mathcal{A}$.
\end{definition}

\begin{remark}
	The geometric meaning of $s_{\mathcal{A}}(\mathcal{B})$ can be explained as follows. Let $\mathcal{D}_i \triangleq \{z\in \mathbb{C} \mid |z-\lambda_i(\mathcal{A}) | \leq \gamma  \}$, $i = 1,\cdots,n$, then $s_{\mathcal{A}}(\mathcal{B}) \leq \gamma$ means that $ \lambda(\mathcal{B}) \subseteq \bigcup_{i=1}^n \mathcal{D}_i$. On the other hand, it can be shown that the Hausdorff distance is a metric on $\{\lambda(\mathcal{A})\mid \mathcal{A} \in \mathbb{C}^{n\times n}\}$.
\end{remark}

The following theorem provides high-probability upper bounds for pole estimation using the single-trajectory $\{(u_k,y_k)\mid 0 \leq k \leq T-1\}$, based on the results of Theorem~\ref{state-space realization of single sample trajectory}.

\begin{theorem}\label{pole of single sample trajectory}
	For the \eqref{linear_system}, under the conditions of Assumption \ref{observable and controllable} and \ref{system order is known}, suppose $\sigma_u = 1$ and $\sigma_v$, $\sigma_e$, $\sigma_w \|F\|$ are bounded by constants, and $\spr(A)^K \leq 0.99$. Let $H$ and $\widehat{H}$ be the Hankel matrices constructed from the Markov parameter matrices $G$ and $\widehat{G}$ with the single-trajectory $\{(u_k,y_k)\mid 0 \leq k \leq T-1\}$, respectively, via~\eqref{H}. Let $\overline{A}, \overline{B}, \overline{C}$ be the the state-space realization corresponding to the output of Ho–Kalman algorithm with input $G$ and $\widehat{A}, \widehat{B}, \widehat{C}$ be the state-space realization corresponding to output of Ho-Kalman algorithm with input $\widehat{G}$.
	Let $\overline{T}_0 = Kq\log^2(Kq)$, where $q = p+m +n$. Suppose $\sigma_{n}(H^-) > 0 $ and perturbation obeys
	\begin{equation}
	    \|L-\widehat{L}\| \leq \frac{ \sigma_{n}(H^-)}{2},
	\end{equation}
	and sample size parameter $\overline{T}$ obeys
	\begin{equation}
	    \frac{\overline{T}}{\log^2 (\overline{T}q)} \sim \Omega \left( \frac{K\overline{T}_0}{\sigma_{n}^2(H^-)} \right).
	\end{equation}
	Then, with high probability (same as Theorem 3.1 in~\cite{oymak2021revisiting}), we have
	\begin{equation}
	    d_{\rm H}(\widehat{A}, \overline{A}) \leq 
		 (\Delta + 2\|\overline{A}\|)^{1-\frac{1}{n}} \Delta^{\frac{1}{n}},
	\end{equation}
	where $\mathcal{C}$ is a constant, and
	\begin{equation}
	    \Delta =  \frac{\mathcal{C}\sqrt{nK}\log(\overline{T}q)\|H\|}{\sigma_{n}^2(H^-)}  \sqrt{\frac{\overline{T}_0}{\overline{T}}}.
	\end{equation}
\end{theorem}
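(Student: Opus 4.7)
The plan is to reduce the pole-estimation problem to a standard spectral perturbation inequality and then invoke Theorem~\ref{state-space realization of single sample trajectory} to supply the perturbation magnitude. The key observation is that conjugation by the unitary $\mathcal{U}$ preserves the spectrum, so $\lambda(\widehat{A}) = \lambda(\mathcal{U}^\top \widehat{A} \mathcal{U})$ and therefore $d_{\rm H}(\widehat{A}, \overline{A}) = d_{\rm H}(\mathcal{U}^\top \widehat{A} \mathcal{U}, \overline{A})$. This is what makes the similarity-transform bound from Theorem~\ref{state-space realization of single sample trajectory} usable for eigenvalue comparison in the first place.

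First, I would apply Elsner's spectral variation theorem, which asserts that for any $P, Q \in \mathbb{C}^{n\times n}$,
\[
d_{\rm H}(P, Q) \leq (\|P\| + \|Q\|)^{1-1/n}\, \|P - Q\|^{1/n}.
\]
Setting $P = \overline{A}$ and $Q = \mathcal{U}^\top \widehat{A} \mathcal{U}$, the triangle inequality gives
\[
\|P\| + \|Q\| \leq 2\|\overline{A}\| + \|\overline{A} - \mathcal{U}^\top \widehat{A} \mathcal{U}\|,
\]
so the exponent-$(1-1/n)$ factor is controlled by $\|\overline{A}\|$ plus a small correction. Second, I would pass from Frobenius to spectral norm via $\|\cdot\| \leq \|\cdot\|_{\rm F}$, applying Theorem~\ref{state-space realization of single sample trajectory} to obtain $\|\overline{A} - \mathcal{U}^\top \widehat{A} \mathcal{U}\| \leq \|\overline{A} - \mathcal{U}^\top \widehat{A} \mathcal{U}\|_{\rm F} \leq \Delta$ on the same high-probability event. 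Substituting yields
\[
d_{\rm H}(\widehat{A}, \overline{A}) \leq (\Delta + 2\|\overline{A}\|)^{1-1/n}\, \Delta^{1/n},
\]
which is exactly the claim.

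There is no genuine obstacle beyond identifying Elsner's inequality as the right deterministic tool: all stochastic content has been absorbed into Theorem~\ref{state-space realization of single sample trajectory}, and the probabilistic event carries over directly since Elsner's bound is deterministic. The one conceptual point worth emphasizing is that the exponent $1/n$ in the pole bound is \emph{intrinsic} to eigenvalue perturbation, not an artifact of the proof: it reflects the worst-case sensitivity of eigenvalues (as illustrated by Jordan-block examples), and this is precisely why the resulting sample-size requirement degrades to $\mathcal{O}(T^{-1/2n})$ despite the underlying state-space estimate converging at the parametric rate $\mathcal{O}(T^{-1/2})$.
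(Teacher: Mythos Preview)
Your proposal is correct and matches the paper's own proof essentially step for step: the paper likewise invokes Elsner's inequality (packaged there as a corollary of Theorem~\ref{perturbation of eigenvalues}), uses the unitary invariance of the spectrum to replace $\widehat{A}$ by $\mathcal{U}^\top\widehat{A}\mathcal{U}$, applies the triangle inequality to bound $\|\mathcal{U}^\top\widehat{A}\mathcal{U}\|+\|\overline{A}\|$ by $\Delta+2\|\overline{A}\|$, and feeds in the Frobenius-norm bound $\Delta$ from Theorem~\ref{state-space realization of single sample trajectory}. Your explicit mention of the Frobenius-to-spectral norm step is a detail the paper leaves implicit, but otherwise the arguments are identical.
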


Theorem~\ref{pole of single sample trajectory} derives
finite-sample bounds for pole estimation using the single-trajectory $\{(u_k,y_k)\mid 0 \leq k \leq T-1\}$ and demonstrates that the estimation error decreases at a rate of at least \( \mathcal{O}(T^{-1/2n}) \), where \( T = K + \overline{T} - 1 \) denotes the length of the single sampling trajectory.

On the other hand, based on Theorem~\ref{state-space realization of single sample trajectory}, Oymak and Ozay~\cite{oymak2021revisiting} point out that, ignoring logarithmic factors, when \( \overline{T}_0 \sim \Theta (Kq) \), for state-space realizations, in order to achieve a constant estimation error with high probability, the sample size parameter \( \overline{T} \) needs to satisfy \(  \overline{T} \sim \Omega (K^2 qn) \),  where $q = p+m +n$. Following a similar approach, one can derive the sample size complexity required to achieve a constant estimation error with high probability for system poles. 

However, the derivation of~\cite{oymak2021revisiting} implicitly assumes the Hankel matrix $H^-$ is well-conditioned, neglecting the critical dependence on $\sigma_{n}(H^-)$, i.e., the minimum singular value of the Hankel matrix $H^-$. This omission is nontrivial, as $\sigma_{n}(H^-)$ fundamentally governs the numerical stability of realization algorithms, this will become evident in the following. To bridge this gap, the following lemma provides a theoretically rigorous characterization of $\sigma_{n}(H^-)$.

\begin{theorem}\label{ill-conditioned}
    For the \eqref{linear_system}, under the condition of Assumption \ref{observable and controllable}, if the matrix $A$ is stable (or marginally stable) and possesses distinct real eigenvalues, then the $n$-th largest singular value of $H^-$ satisfies
	\begin{equation}\label{L}
		\sigma_{n}(H^-) \leq 2\overline{\delta} nK\sqrt{pm} \rho^{-\max \left\{
			\frac{\left\lfloor \frac{n-1}{2m} \right\rfloor}{\log (2mK_1)}, \frac{\left\lfloor \frac{n-1}{2p} \right\rfloor}{\log (2pK_2)}	
			\right\} },
	\end{equation}
	where $ \rho \triangleq e^{\frac{\pi^2}{4}} \approx 11.79$, and $K=K_1+K_2 +1$, and 
	\begin{equation}
	    \overline{\delta} \triangleq \Big(\max_{i,j}|b_{ij}|\Big)\Big(\max_{i,j}|c_{ij}|\Big).
	\end{equation}
    is defined as the product of the largest absolute entry of matrix $B$ and that of matrix $C$.
\end{theorem}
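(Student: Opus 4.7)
The plan is to exploit the distinct real spectrum of $A$ to exhibit $H^-$ as a superposition of $n$ rank-one Vandermonde-type matrices, and then upper-bound $\sigma_n(H^-)$ by constructing a rank-$(n-1)$ Eckart--Young approximation whose residual is controlled through a Chebyshev/Zolotarev polynomial estimate of the type underlying Beckermann's condition-number theorem for real Vandermonde matrices.

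First, I would diagonalize $A = V\Lambda V^{-1}$ with $\Lambda=\mathrm{diag}(\lambda_1,\dots,\lambda_n)$; since $\mathrm{spr}(A)\le 1$, the nodes $\lambda_s$ lie in $[-1,1]$. Setting $\tilde C = CV$ with columns $\tilde c_s\in\mathbb R^m$ and $\tilde B = V^{-1}B$ with rows $\tilde b_s^{\top}\in\mathbb R^{1\times p}$, every block satisfies $CA^{k+l}B = \sum_{s=1}^n \tilde c_s\tilde b_s^{\top}\lambda_s^{k+l}$, so
\[
H^- = \sum_{s=1}^n\bigl(v_s^{(K_1)}\otimes\tilde c_s\bigr)\bigl(v_s^{(K_2)}\otimes\tilde b_s\bigr)^{\top},\quad v_s^{(K)} = (1,\lambda_s,\dots,\lambda_s^{K-1})^{\top}.
\]
This rank-$n$ representation converts the problem into bounding the smallest non-zero singular value of a sum of $n$ rank-one matrices whose structure is governed by a real Vandermonde.

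Next, I would construct a rank-$(n-1)$ approximation via a polynomial filter. For any polynomial $q$ of degree at most $k$ normalized by $q(\lambda_{s^\star})=1$, the linear combination of $H^-$-rows weighted by the coefficients of $q$ annihilates the $s^\star$-th rank-one term and leaves contributions from the other modes scaled by $q(\lambda_s)$, $s\ne s^\star$. Recombining these into a rank-$(n-1)$ matrix $R$ yields
\[
\|H^- - R\|\;\le\;n\,\bar\delta\sqrt{mp\,K_1K_2}\cdot\max_{s\ne s^\star}|q(\lambda_s)|.
\]
To unlock the $m$ and $p$ denominators in the stated exponent, I would pigeonhole the $n-1$ non-distinguished indices by the output channel $i'$ in which $|\tilde c_{i',s}|$ is largest and further by the sign of $\lambda_s$, isolating a scalar sub-problem over $\lfloor(n-1)/(2m)\rfloor$ active nodes; the symmetric column-side reduction gives $\lfloor(n-1)/(2p)\rfloor$. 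The residual polynomial extremum is then estimated via a Beckermann/Zolotarev-type bound: for $N$ real nodes in $[-1,1]$,
\[
\min_{\substack{q:\,\deg q\le k\\ q\not\equiv 0}}\frac{\max_s|q(\lambda_s)|}{\|q\|_\infty}\;\le\;\rho^{-k/\log(2N)},\qquad \rho = e^{\pi^2/4},
\]
an estimate equivalent to Beckermann's condition-number bound for real Vandermonde matrices and to the Zolotarev rate for rational approximation on a real interval. Applying it with $(N,k)=(mK_1,\lfloor(n-1)/(2m)\rfloor)$ and $(pK_2,\lfloor(n-1)/(2p)\rfloor)$ and taking the larger of the two exponents yields the $\max\{\cdot,\cdot\}$ in the bound, while absorbing $\sqrt{K_1K_2}\le K$, the summation factor $n$, and the entrywise bound $\bar\delta\sqrt{mp}$ on $\|\tilde c_s\|\|\tilde b_s\|$ produces the prefactor $2\bar\delta nK\sqrt{mp}$.

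The main obstacle is the Zolotarev--Beckermann estimate itself: the precise constant $\rho=e^{\pi^2/4}$ together with the $1/\log(2N)$ exponent is not within reach of an elementary Chebyshev argument and depends on the elliptic-function machinery of optimal rational approximation on a real interval. A secondary subtlety is that $\tilde c_s,\tilde b_s$ involve the diagonalizer $V$; to reach the stated prefactor involving only $\bar\delta$, one must either work in a coordinate-free formulation (e.g., via the companion realization of $A$) or exploit observability/controllability to absorb $\|V\|\|V^{-1}\|$ into factors dominated by the polynomial extremum.
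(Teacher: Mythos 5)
Your overall strategy (diagonalize $A$, write $H^-$ as a sum of $n$ rank-one Vandermonde-type terms, and bound $\sigma_n(H^-)$ via a rank-$(n-1)$ approximation built from a polynomial filter and a Zolotarev/Beckermann estimate) is genuinely different from the paper's argument, but as written it has gaps that prevent it from closing. First, the quantitative heart of the bound --- the estimate with constant $\rho=e^{\pi^2/4}$ and exponent $k/\log(2N)$ --- is exactly what you leave unproven; you say yourself it is ``not within reach of an elementary Chebyshev argument.'' The paper does not re-derive it either: it factors $H^-=OQ^-$, uses $\sigma_n(OQ^-)\le\sigma_n(O)\|Q^-\|$ and $\sigma_n(OQ^-)\le\|O\|\,\sigma_n(Q^-)$, and invokes the Krylov-matrix singular-value bound (Lemma~\ref{krylov matrix}, quoted from \cite{sun2023finite}) once for $O$ (giving the $\lfloor (n-1)/(2m)\rfloor/\log(2mK_1)$ exponent) and once for $Q^-$ (giving $\lfloor (n-1)/(2p)\rfloor/\log(2pK_2)$); the $\max$ in the statement comes from having two independent bounds, not from choosing the larger exponent inside a single construction. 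Your pigeonhole step ``by output channel and sign of $\lambda_s$'' that is supposed to isolate $\lfloor(n-1)/(2m)\rfloor$ active nodes is not substantiated, and it is not how the denominators $2m$ and $2p$ arise in the cited lemma, where they reflect the block width of the Krylov matrix.

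Second, the prefactor does not follow from your decomposition. There $\tilde c_s = CVe_s$ and $\tilde b_s^{\top}=e_s^{\top}V^{-1}B$ depend on the eigenvector matrix $V$, and $\|\tilde c_s\|\,\|\tilde b_s\|$ is not bounded by $\overline{\delta}\sqrt{mp}$ in general: $A$ is only assumed diagonalizable with distinct real eigenvalues, not normal, so $\|V\|\,\|V^{-1}\|$ can be arbitrarily large and your inequality $\|H^--R\|\le n\,\overline{\delta}\sqrt{mpK_1K_2}\,\max_{s\ne s^\star}|q(\lambda_s)|$ fails; the fixes you gesture at (companion realization, absorbing $\|V\|\|V^{-1}\|$ into the polynomial extremum) are not carried out and would not obviously leave a prefactor depending only on $\overline{\delta}$. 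The paper avoids this entirely by never diagonalizing: the prefactor $2\overline{\delta}nK\sqrt{pm}$ comes from the entrywise norm bounds $\|O\|\le\overline{c}\sqrt{K_1mn}$ and $\|Q^-\|\le\overline{b}\sqrt{K_2pn}$ (Lemma~\ref{upper bound of H and F}) combined with the constant $4$ from the Krylov lemma and $\sqrt{K_1K_2}\le K/2$. To complete your route you would need both the full Zolotarev/Beckermann machinery and a genuinely new argument controlling the eigenvector conditioning, so as it stands the proposal is not a proof.
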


\begin{remark}
    Theorem~\ref{ill-conditioned} reveals that under the given conditions above, $\sigma_{n}(H^-)$ decays \textbf{super-polynomially} with respect to $\max\{n/m, n/p\}$.
\end{remark}

For the \eqref{linear_system},  when matrix $A$ is stable or marginally stable and all of its eigenvalues are distinct and real, the results of Theorem~\ref{pole of single sample trajectory} and Theorem~\ref{ill-conditioned} reveal that, ignoring the logarithmic factors, if the sample parameter satisfies \( \overline{T}_0 \sim \Omega\left( Kq\right) \), where $q = p+m+n$, the required sample parameter \( \overline{T} \) to achieve a constant estimation error for the system poles with high probability must satisfy
\begin{equation}\label{appro_1}
	\overline{T}  \sim \Omega \left( \frac{q}{n pm} \varrho^{\max \left\{\frac{\left\lfloor \frac{n-1}{2m} \right\rfloor}{\log (2K_1m)}, \frac{\left\lfloor \frac{n-1}{2p} \right\rfloor}{\log (2K_2p)}\right\}}  \right),
\end{equation}
where \( \varrho \triangleq \rho^4 = e^{\pi^2} \approx 19333.69 \), and the bound is derived based on Lemma~\ref{upper bound of H and F}. Note that the length of the sample trajectory satisfies \( T = K + \overline{T} - 1 \), which indicates that, achieving a constant estimation error requires a \textbf{super-polynomial} sample size $T$ in $\max\{n/m, n/p\}$, where $n/m$ and $n/p$ denote the state-to-output and state-to-input dimension ratios, respectively.

\subsection{Multiple-trajectory setting}

In this subsection, similar to the previous one, we provide finite-sample high-probability upper bounds for pole estimation using multiple trajectories $\{(u_k^{(i)},y_k^{(i)})\mid 0 \leq k \leq K-1, 1 \leq i \leq N\}$, based on Zheng and Li~\cite{zheng2020non}. We further analyze the sample complexity required to achieve a constant estimation error.

\begin{theorem}[Theorem 1 in \cite{zheng2020non}]\label{Markov parameter of multiple sample trajectories}
	For the \eqref{linear_system}, under the conditions of Assumption \ref{observable and controllable} and \ref{system order is known}, fix any $0<\delta<1$. If the number of sample trajectories $N \geq 8 p K+4(p+n+m+ 4) \log (3 K / \delta)$, we have with probability at least $1-\delta$,
	\begin{equation}
	    \|\hat{G}-G\| \leq \frac{\sigma_v C_1+\sigma_w C_2}{\sigma_u} \sqrt{\frac{1}{N}},
	\end{equation}
	where
	\begin{align}\label{C_1}
		C_1 &= 8  \sqrt{2 K(K+1)(p+m) \log (27 K / \delta)}, \\ \label{C_2}
		C_2 &= 16\|F\| \sqrt{\left(\frac{K^3}{3}+\frac{K^2}{2}+\frac{K}{6}\right) 2(p+n) \log (27 K / \delta)}.
	\end{align}
\end{theorem}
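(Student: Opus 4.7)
The plan is to cast the bound as a standard least-squares perturbation problem and then control the relevant singular values with Gaussian concentration. I would begin by unrolling the dynamics so that, for every trajectory $i$ and every time index $k \in \{0, \ldots, K-1\}$, one obtains
\begin{equation}
y_k^{(i)} \;=\; [GU^{(i)}]_{:,\,k+1} + F\, \bar{w}_k^{(i)} + v_k^{(i)},
\end{equation}
where $F$ is the matrix in~\eqref{F} and $\bar{w}_k^{(i)}$ is the stacked past process noise (extended by zeros because $x_0^{(i)} = \mathbf{0}$). Aggregating into block form yields $Y = GU + V + W_F$, with $V$ the measurement-noise matrix and $W_F$ the filtered process-noise matrix. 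The least-squares solution~\eqref{G_multi} then gives
\begin{equation}
\widehat{G} - G \;=\; (V + W_F)\, U^\top (UU^\top)^{-1},
\end{equation}
reducing the task to a lower bound on $\sigma_{\min}(UU^\top)$ together with upper bounds on $\|VU^\top\|$ and $\|W_F U^\top\|$.

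The second step is to show $\sigma_{\min}(UU^\top) \gtrsim \sigma_u^2 N$ with the stated probability whenever $N \gtrsim pK + \log(K/\delta)$. Because the trajectories are independent, $UU^\top = \sum_{i=1}^{N} U^{(i)}(U^{(i)})^\top$ is a sum of i.i.d.\ PSD random matrices of size $pK \times pK$ whose expectation is block-diagonal and of order $\sigma_u^2 I_{pK}$ (up to a block rescaling arising from the Toeplitz width). A matrix Chernoff / sub-exponential concentration bound then delivers the required lower bound, and the precise sample-size threshold $8pK + 4(p+n+m+4)\log(3K/\delta)$ emerges from the ambient dimension $pK$ together with the realization-dimension and confidence terms produced by the covering argument in the concentration inequality.

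The third step is to bound the noise-input cross-terms. The key observation is that the process noise $w$, the measurement noise $v$ and the input $u$ are mutually independent, and the triples $(u^{(i)}, v^{(i)}, w^{(i)})$ are independent across $i$. Conditioning on $U$, the matrix $VU^\top$ is therefore Gaussian with conditional covariance controlled by $\|UU^\top\|$, for which the matching upper bound $\|UU^\top\| \lesssim \sigma_u^2 NK$ holds with high probability by the same concentration as above. A standard Gaussian tail bound for the spectral norm of a rectangular Gaussian matrix then produces a term of order $\sigma_v \sqrt{N K(K+1)(p+m)\log(K/\delta)}$, which is exactly the structure of $C_1$ in~\eqref{C_1}: the factor $(p+m)$ reflects the row dimension $m$ of $V$ and the column dimension $p$ per time step in $U$, while $K(K+1)$ encodes the Toeplitz width. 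The same conditioning argument applied to $W_F U^\top$ absorbs the factor $\|F\|$ appearing in~\eqref{C_2}, and the cubic scaling $\tfrac{K^3}{3}+\tfrac{K^2}{2}+\tfrac{K}{6} = \sum_{k=1}^{K} k^2$ comes from the fact that the process noise at time $k$ passes through $k$ nested stages of the dynamics before being cross-correlated with $U$.

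A final union bound over the three high-probability events gives
\begin{equation}
\|\widehat G - G\| \;\le\; \frac{\|VU^\top\| + \|W_F U^\top\|}{\sigma_{\min}(UU^\top)} \;\le\; \frac{\sigma_v C_1 + \sigma_w C_2}{\sigma_u}\sqrt{\frac{1}{N}}.
\end{equation}
The hardest part is dealing with the block-Toeplitz structure of each $U^{(i)}$: its columns are strongly correlated because every $u_j^{(i)}$ appears in up to $K$ columns, so neither the classical OLS analysis for i.i.d.\ regressors nor the simplest Gaussian-matrix concentration bounds apply directly. One must exploit the block decomposition carefully to produce the sharp $K$-dependence inside $C_1$ and $C_2$, and the independence of $V$ and $W_F$ from $U$, which permits the conditional-Gaussian argument above, is the structural feature that makes the whole chain tractable.
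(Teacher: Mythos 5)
This theorem is imported by the paper verbatim from Zheng and Li~\cite{zheng2020non} (no proof is given here), and your plan reconstructs essentially the argument of that source: the data equation $Y = GU + F\mathcal{W} + V$ obtained by unrolling the dynamics with $x_0=\mathbf{0}$, the error decomposition $\widehat{G}-G = (F\mathcal{W}+V)U^\top(UU^\top)^{-1}$, a high-probability lower bound $\sigma_{\min}(UU^\top)\gtrsim \sigma_u^2 N$ under $N\gtrsim pK$, conditional-Gaussian spectral-norm bounds on the two cross terms exploiting independence of $(u,v,w)$, and a final union bound, with the $K(K+1)$ and $\sum_{k=1}^K k^2$ factors arising exactly as you indicate. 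The one technical caveat is the $\sigma_{\min}(UU^\top)$ step: a matrix-Chernoff bound does not apply verbatim to unbounded Gaussian summands, so you would need the sub-exponential Bernstein-plus-net route you allude to, or, more simply (as in the cited proof), note that $UU^\top \succeq \sum_{i} c^{(i)}(c^{(i)})^\top$ where $c^{(i)}\in\mathbb{R}^{pK}$ is the last, fully populated column of $U^{(i)}$, which reduces the step to a standard i.i.d.\ Wishart smallest-singular-value bound and directly produces the $N \ge 8pK+\cdots$ threshold.
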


\begin{corollary}[Corollary 1 in \cite{zheng2020non}]\label{state-space realization of multiple sample trajectories}
	For the \eqref{linear_system}, under the conditions of Assumption \ref{observable and controllable} and \ref{system order is known}, for $N$ and $K$ sufficiently large, there exists a unitary matrix $\mathcal{U}$, and a constant $C_3$ depending on system parameters $(\overline{A},\overline{B},\overline{C})$, the dimension $(n,p,m,K)$ and $\sigma_u, \sigma_w, \sigma_v$, and logarithmic factor of $N$ such that we have with high probability
	\begin{equation}
	    \max \left\{ \| \widehat{A} - \mathcal{U}\overline{A}\mathcal{U}^\top, \|\widehat{B} - \mathcal{U}\overline{B} \|, \|\widehat{C}- \overline{C} \mathcal{U}^\top\|  \right\} 
		\leq \frac{C_3}{\sqrt{N}},
	\end{equation}
	where $(\widehat{A},\widehat{B},\widehat{C})$ is the output of the Ho-Kalman algorithm on the least-squares estimation $\widehat{G}$ from~\eqref{G_multi}.
\end{corollary}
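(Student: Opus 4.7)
The plan is to chain the Markov parameter bound of Theorem~\ref{Markov parameter of multiple sample trajectories} with a deterministic perturbation analysis of the Ho-Kalman algorithm. Since $\|\widehat{G}-G\| = \mathcal{O}(1/\sqrt{N})$ holds with the prescribed probability, and every subsequent step of the Ho-Kalman pipeline is a Lipschitz operation (up to a global unitary) in a neighborhood where the perturbation is small enough to preserve the $n$-dimensional dominant subspace, the entire argument reduces to propagating $\|\widehat{G}-G\|$ through the algebraic operations: Hankel stacking, best rank-$n$ truncation, SVD, and pseudo-inverse composition.

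First, I would observe that $\widehat{H}-H$ is built from blocks of $\widehat{G}-G$, so $\|\widehat{H}-H\| \leq c\sqrt{K}\,\|\widehat{G}-G\|$ for an absolute constant $c$, and the same bound transfers to $\widehat{H}^\pm-H^\pm$. Because $\widehat{L}$ is the best rank-$n$ approximation of $\widehat{H}^-$ and $H^-$ has rank $n$, the triangle inequality yields $\|\widehat{L}-L\|\leq 2\|\widehat{H}^- - H^-\|$. For sufficiently large $N$, this perturbation falls below $\sigma_{n}(H^-)/2$, and Weyl's inequality ensures that $\widehat{L}$ has a well-separated gap between its $n$-th and $(n+1)$-th singular values. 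I would then invoke Wedin's $\sin\Theta$ theorem to find a unitary matrix $\mathcal{U}\in\mathbb{R}^{n\times n}$ such that the aligned left and right singular bases satisfy $\|\mathcal{U}_1\Sigma_1^{1/2} - \overline{\mathcal{U}}_1\overline{\Sigma}_1^{1/2}\mathcal{U}\| = \mathcal{O}\!\left(\|\widehat{L}-L\|/\sqrt{\sigma_n(H^-)}\right)$, with the analogous bound for the right factor. Translating this back through the definitions $\widehat{O}=\mathcal{U}_1\Sigma_1^{1/2}$ and $\widehat{Q}=\Sigma_1^{1/2}\mathcal{V}_1^\top$ yields the desired bounds on $\|\widehat{O}-O\mathcal{U}\|$ and $\|\widehat{Q}-\mathcal{U}^\top Q\|$, and hence on $\|\widehat{C}-\overline{C}\mathcal{U}^\top\|$ and $\|\widehat{B}-\mathcal{U}\overline{B}\|$ since $\widehat{C}$ and $\widehat{B}$ are simply the top and leftmost blocks of $\widehat{O}$ and $\widehat{Q}$.

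For $\widehat{A} = \widehat{O}^\dagger \widehat{H}^+ \widehat{Q}^\dagger$, I would write
\begin{equation*}
\widehat{A}-\mathcal{U}\overline{A}\mathcal{U}^\top = \widehat{O}^\dagger\widehat{H}^+\widehat{Q}^\dagger - (O\mathcal{U})^\dagger H^+ (\mathcal{U}^\top Q)^\dagger,
\end{equation*}
split it into three telescoping terms corresponding to perturbing $\widehat{O}$, $\widehat{H}^+$, and $\widehat{Q}$ one at a time, and use the standard pseudo-inverse perturbation bound $\|P^\dagger - R^\dagger\| \leq \sqrt{2}\,\|P-R\|/\min(\sigma_{\min}(P),\sigma_{\min}(R))^2$ together with $\sigma_{\min}(\widehat{O}),\sigma_{\min}(\widehat{Q}) \geq \sqrt{\sigma_n(H^-)/2}$. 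This produces an overall bound of order $\|H\|\cdot\|\widehat{G}-G\|/\sigma_n^2(H^-)$, which after substituting the rate from Theorem~\ref{Markov parameter of multiple sample trajectories} collapses to $C_3/\sqrt{N}$ with a constant $C_3$ that depends polynomially on $\|H\|$, $1/\sigma_n(H^-)$, $K$, the dimensions, and the noise/input variances, as well as logarithmically on $N$ through $C_1,C_2$.

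The main obstacle is neither Wedin's theorem nor the pseudo-inverse bound in isolation, but the bookkeeping required to guarantee that a \emph{single} unitary $\mathcal{U}$ simultaneously aligns $\widehat{O}$ with $O$, $\widehat{Q}$ with $Q$, and $\widehat{A}$ with $\overline{A}$. This alignment hinges on the fact that the balanced realization fixes $\mathcal{U}$ through the SVD of $\widehat{L}$, and that the same $\mathcal{U}$ is compatible with the factorization $H^+ = O\overline{A}Q$; verifying this compatibility, together with ensuring that the \emph{small-perturbation} hypothesis of Wedin's theorem is implied by the lower bound on $N$ imposed in Theorem~\ref{Markov parameter of multiple sample trajectories}, constitutes the bulk of the technical work and is precisely where the constant $C_3$ is determined.
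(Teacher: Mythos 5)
Your proposal is correct and follows essentially the same route as the paper: a high-probability bound on $\|\widehat{G}-G\|$ from Theorem~\ref{Markov parameter of multiple sample trajectories} chained with a deterministic perturbation analysis of the Ho--Kalman map, with the "$N$ and $K$ sufficiently large" hypothesis absorbing the small-perturbation condition $\|\widehat{L}-L\|\leq \sigma_n(H^-)/2$. The only difference is that you re-derive the deterministic step (Hankel stacking, rank-$n$ truncation, Wedin-type simultaneous alignment of $\widehat{O},\widehat{Q}$, and pseudo-inverse perturbation for $\widehat{A}$) that the paper simply imports as Lemma~\ref{lemma_1} and Theorem~\ref{lemma_2} from \cite{oymak2021revisiting} (and that \cite{zheng2020non} imports from Section~5.3 of \cite{sarkar2019near}), so your plan is an unpacked version of the same argument rather than a genuinely different one.
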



\begin{remark}
	It is worth noting that Corollary~\ref{state-space realization of multiple sample trajectories} in~\cite{zheng2020non} is derived by combining Theorem~\ref{Markov parameter of multiple sample trajectories} with the results from Section 5.3 of~\cite{sarkar2019near}. However, neither~\cite{zheng2020non} nor its extended version~\cite{zheng2020nonarxiv} provide an explicit expression for \( C_3 \). In fact, by leveraging Theorem~\ref{Markov parameter of multiple sample trajectories} along with Lemma~\ref{lemma_1} and Theorem~\ref{lemma_2} from~\cite{oymak2021revisiting}, one can similarly obtain results analogous to Corollary~\ref{state-space realization of multiple sample trajectories}, establishing that  the finite sample estimation errors of the state-space realization decay at a rate of at least $1/\sqrt{N}$. As demonstrated in the following corollary.
\end{remark}

\begin{corollary}\label{state-space realization of multiple sample trajectories_1}
	For the \eqref{linear_system}, under the conditions of Assumption \ref{observable and controllable} and \ref{system order is known}, fix any $0<\delta<1$. If the number of sample trajectories $N \geq 8 p K+4(p+n+m+ 4) \log (3 K/ \delta)$, we have with probability at least $1-\delta$,
	\begin{multline}
			\| \overline{A} -  \mathcal{U}^\top \widehat{A} \mathcal{U}  \|_{\rm F} \leq  \frac{9\sqrt{n}}{\sigma_{n}(H^-)} \left( \frac{ \|H^+\|}{\sigma_{n}(H^-)} + 2\right) \\ \cdot
			 \sqrt{\min\{K_1,K_2+1\}} \frac{\sigma_v C_1+\sigma_w C_2}{\sigma_u} \sqrt{\frac{1}{N}},
	\end{multline}
	where $C_1$ and $C_2$ are respectively defined in~\eqref{C_1} and~\eqref{C_2}, and $(\widehat{A},\widehat{B},\widehat{C})$ is the output of the Ho-Kalman algorithm on the least-squares estimation $\widehat{G}$ from~\eqref{G_multi}.
\end{corollary}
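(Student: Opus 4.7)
The plan is to combine the concentration bound of Theorem~\ref{Markov parameter of multiple sample trajectories} on $\|\widehat{G}-G\|$ with a deterministic Ho--Kalman perturbation analysis, exactly as suggested in the remark preceding the statement. The argument is a three-stage pipeline: (i) a high-probability spectral-norm bound on $\widehat{G}-G$, (ii) a translation into perturbation bounds on the blocks $\widehat{H}^{\pm}$ of the Hankel matrix, and (iii) an SVD-plus-pseudo-inverse perturbation analysis to control $\|\overline{A}-\mathcal{U}^\top\widehat{A}\mathcal{U}\|_{\rm F}$.

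First I would translate the bound on $\|\widehat{G}-G\|$ into bounds on $\|\widehat{H}^{-}-H^{-}\|$ and $\|\widehat{H}^{+}-H^{+}\|$. Because each Markov parameter $CA^iB$ appears in at most $\min\{K_1,K_2+1\}$ antidiagonal blocks of the Hankel matrix, a standard counting argument (e.g., decomposing the block-Hankel structure as a sum of permutations of block-diagonal copies of $\widehat{G}-G$) yields
\begin{equation*}
\max\bigl\{\|\widehat{H}^{-}-H^{-}\|,\,\|\widehat{H}^{+}-H^{+}\|\bigr\}\leq \sqrt{\min\{K_1,K_2+1\}}\,\|\widehat{G}-G\|,
\end{equation*}
which contributes precisely the $\sqrt{\min\{K_1,K_2+1\}}$ factor to the stated bound.

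Next I would invoke the deterministic Ho--Kalman perturbation results of Lemma~\ref{lemma_1} and Theorem~\ref{lemma_2} of~\cite{oymak2021revisiting} (labelled in the remark as \texttt{lemma\_1} and \texttt{lemma\_2}). As soon as the perturbation threshold $\|\widehat{H}^{-}-H^{-}\|\leq \sigma_{n}(H^-)/2$ is met, the rank-$n$ truncated SVD of $\widehat{H}^{-}$ produces $(\widehat{O},\widehat{Q})$ that align with $(O,Q)$ up to a \emph{single} unitary transform $\mathcal{U}$, with Frobenius-norm errors of order $\sqrt{n/\sigma_{n}(H^-)}\,\|\widehat{H}^{-}-H^{-}\|$. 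Expanding the identity $\widehat{A}=\widehat{O}^\dagger \widehat{H}^{+}\widehat{Q}^\dagger$ as a perturbation of $\overline{A}=O^\dagger H^{+}Q^\dagger$ produces three terms, one for each pseudo-inverse factor and one for $\widehat{H}^{+}$. Bounding the pseudo-inverse norms by $1/\sqrt{\sigma_{n}(H^-)}$, using $\|H^+\|$ to control the middle term, and accumulating Frobenius-to-spectral conversions yields the prefactor
\begin{equation*}
\frac{9\sqrt{n}}{\sigma_{n}(H^-)}\left(\frac{\|H^+\|}{\sigma_{n}(H^-)}+2\right),
\end{equation*}
after which substitution of Theorem~\ref{Markov parameter of multiple sample trajectories} for $\|\widehat{G}-G\|$ closes the chain.

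The main obstacle is bookkeeping: one has to verify that the \emph{same} unitary $\mathcal{U}$ chosen from the SVD alignment of $\widehat{H}^{-}$ simultaneously certifies the bounds on $\widehat{O}$, $\widehat{Q}$, and $\widehat{A}$, and that every pseudo-inverse norm that appears is expressed in terms of $\sigma_{n}(H^-)$ rather than the a priori smaller $\sigma_{n}(\widehat{H}^-)$. This forces the perturbation-threshold condition $\|\widehat{H}^{-}-H^{-}\|\leq \sigma_{n}(H^-)/2$, which in turn follows from the sample-size hypothesis $N\geq 8pK+4(p+n+m+4)\log(3K/\delta)$ once $N$ and $K$ are taken in the ``sufficiently large'' regime of the corollary. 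No new probabilistic ingredient beyond Theorem~\ref{Markov parameter of multiple sample trajectories} is required.
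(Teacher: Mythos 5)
Your proposal follows exactly the route of the paper's own (one-line) proof: bound $\|\widehat{G}-G\|$ by Theorem~\ref{Markov parameter of multiple sample trajectories}, pass to the Hankel perturbations $\|\widehat{H}^{\pm}-H^{\pm}\|$ and $\|\widehat{L}-L\|$ via Lemma~\ref{lemma_1}, and conclude with the deterministic robustness bound of Theorem~\ref{lemma_2} using $\sigma_n(L)=\sigma_n(H^-)$. The only caveat—shared by the paper itself—is that invoking Theorem~\ref{lemma_2} implicitly presumes the threshold $\|\widehat{L}-L\|\leq\sigma_n(H^-)/2$, which the stated sample-size condition alone does not formally guarantee; otherwise the argument is correct and essentially identical.
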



Further, the following theorem provides a high-probability upper bound for the identification error of the system poles using multiple trajectories $\{(u_k^{(i)},y_k^{(i)})\mid 0 \leq k \leq K-1, 1 \leq i \leq N\}$, based on the results of Corollary~\ref{state-space realization of multiple sample trajectories_1}.

\begin{theorem}\label{pole of multiple sample trajectories}
	For the \eqref{linear_system}, under the conditions of Assumption \ref{observable and controllable} and \ref{system order is known}, fix any $0<\delta<1$. If the number of sample trajectories $N \geq 8 p K+4(p+n+m+ 4) \log (3K / \delta)$, we have with probability at least $1-\delta$,
	\begin{equation}
	    d_{\rm H}(\widehat{A}, \overline{A}) \leq 
		(\Delta' + 2\|\overline{A}\|)^{1-\frac{1}{n}} (\Delta')^{\frac{1}{n}},
	\end{equation}
	where $(\widehat{A},\widehat{B},\widehat{C})$ is the output of the Ho-Kalman algorithm on the least-squares estimation $\widehat{G}$ from~\eqref{G_multi}, and
	\begin{multline}
		\Delta' =  \frac{9\sqrt{n}}{\sigma_{n}(H^-)} \left( \frac{ \|H^+\|}{\sigma_{n}(H^-)} + 2\right) \sqrt{\min\{K_1,K_2+1\}} \\
		\cdot \frac{\sigma_v C_1+\sigma_w C_2}{\sigma_u} \sqrt{\frac{1}{N}},
	\end{multline}
	and $C_1$ and $C_2$ are respectively defined in~\eqref{C_1} and~\eqref{C_2}.
\end{theorem}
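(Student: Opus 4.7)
The plan is to reduce the Hausdorff-distance bound for the spectra of $\widehat{A}$ and $\overline{A}$ to the Frobenius-norm bound already established in Corollary~\ref{state-space realization of multiple sample trajectories_1}, by invoking a classical spectral variation inequality. The argument mirrors exactly that of Theorem~\ref{pole of single sample trajectory}; the only change is to substitute the single-trajectory realization error $\Delta$ by the multiple-trajectory error $\Delta'$.

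First, I would note that $d_{\rm H}$ depends only on the spectra of its arguments, and that a unitary similarity preserves the spectrum. Hence
\begin{equation}
d_{\rm H}(\widehat{A},\overline{A}) = d_{\rm H}(\mathcal{U}^\top\widehat{A}\mathcal{U},\overline{A}),
\end{equation}
where $\mathcal{U}$ is the unitary matrix furnished by Corollary~\ref{state-space realization of multiple sample trajectories_1}. This is the key trick, since the corollary only controls the perturbation after the unitary alignment.

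Next, I would apply Elsner's spectral variation inequality, which states that for any two $n \times n$ complex matrices $M$ and $N$,
\begin{equation}
d_{\rm H}(M,N) \leq (\|M\| + \|N\|)^{1-1/n}\|M-N\|^{1/n}.
\end{equation}
Taking $M = \overline{A}$ and $N = \mathcal{U}^\top\widehat{A}\mathcal{U}$, the triangle inequality together with Corollary~\ref{state-space realization of multiple sample trajectories_1} gives $\|N\| \leq \|\overline{A}\| + \|\overline{A}-\mathcal{U}^\top\widehat{A}\mathcal{U}\| \leq \|\overline{A}\| + \Delta'$, hence $\|M\|+\|N\| \leq 2\|\overline{A}\| + \Delta'$. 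Substituting and using $\|M-N\| \leq \|M-N\|_{\rm F} \leq \Delta'$ yields the claimed bound
\begin{equation}
d_{\rm H}(\widehat{A},\overline{A}) \leq (\Delta' + 2\|\overline{A}\|)^{1-\frac{1}{n}} (\Delta')^{\frac{1}{n}}.
\end{equation}

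The proof is essentially a one-line combination, so there is no genuine obstacle; the only subtle point is making sure that the Frobenius bound from Corollary~\ref{state-space realization of multiple sample trajectories_1} applies to exactly the matrix $\mathcal{U}^\top\widehat{A}\mathcal{U} - \overline{A}$ (and not to some other similarity), which is guaranteed by the statement of the corollary itself. The resulting dependence $\Delta' \sim N^{-1/2}$ then translates, after taking the $n$th root, into the $\mathcal{O}(N^{-1/2n})$ decay rate for the pole estimation error, exactly parallel to the single-trajectory case.
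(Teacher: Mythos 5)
Your proposal is correct and follows essentially the same route as the paper: invoke Corollary~\ref{state-space realization of multiple sample trajectories_1} for the unitary matrix $\mathcal{U}$ and the bound $\Delta'$, use unitary invariance of the spectrum to replace $\widehat{A}$ by $\mathcal{U}^\top\widehat{A}\mathcal{U}$, and then apply the Elsner-type bound (Corollary~\ref{hausdorff_distance}) together with the triangle inequality. Your explicit remark that $\|\cdot\| \leq \|\cdot\|_{\rm F}$ bridges the Frobenius-norm bound of the corollary to the spectral-norm form of the inequality, which is the same (implicit) step taken in the paper's proof.
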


Theorem~\ref{pole of multiple sample trajectories} derives
finite-sample bounds for pole estimation using multiple trajectories $\{(u_k^{(i)},y_k^{(i)})\mid 0 \leq k \leq K-1, 1 \leq i \leq N\}$, and demonstrates that the estimation error of the system poles decreases at a rate of at least \( \mathcal{O}(N^{-1/2n}) \), where $N$ denotes the number of sample trajectories. 

By combining the results of  Theorem~\ref{pole of multiple sample trajectories} and Theorem~\ref{ill-conditioned}, it follows that for the \eqref{linear_system},  when matrix $A$ is stable or marginally stable and all of its eigenvalues are distinct and real, ignoring the logarithmic factors, the required sample parameter \( N \) to achieve a constant estimation error for the system poles with high probability must satisfy
\begin{equation}\label{appro_2}
	N  \sim \Omega \left( \frac{nK^3}{p} \varrho^{\max \left\{\frac{\left\lfloor \frac{n-1}{2m} \right\rfloor}{\log (2K_1m)}, \frac{\left\lfloor \frac{n-1}{2p} \right\rfloor}{\log (2K_2p)}\right\}}  \right),
\end{equation}
where \( \varrho \triangleq \rho^4 = e^{\pi^2} \approx 19333.69 \), and the bound is derived based on Lemma~\ref{upper bound of H and F}. This implies that, achieving a constant estimation error requires a \textbf{super-polynomial} sample size $K \times N$ in $\max\{n/m, n/p\}$, where $n/m$ and $n/p$ denote the state-to-output and state-to-input dimension ratios, respectively.


\section{Numerical Results}
\label{sec:numerical results}

In this section, we first consider the identification of the classical two-mass spring-damper system \cite{liu2013fast} using the Ho-Kalman algorithm. The system consists of two point masses, $m_1 = 1$ kg and $m_2 = 1$ kg, interconnected in series via linear springs and viscous dampers. Each mass is also coupled to a fixed boundary via a spring-damper pair, as illustrated in \autoref{fig:enter-label}.

\begin{figure}[htbp]
    \centering
    \begin{tikzpicture}[
       spring/.style={  
        thick,  
        decoration={  
            coil,  
            aspect=0.5,  
            segment length=1mm,  
            amplitude=1mm,  
            pre length=3mm,  
            post length=3mm  
        },
        decorate  
    },
   wall/.style={  
        thick,
        pattern=north east lines,  
        minimum height=1.5cm,  
        minimum width=1pt  
    },
    wheel/.style={  
        circle,
        fill=white,
        draw=black,
        thick,
        minimum size=4pt,  
        outer sep=0pt
    },
    ground/.style={  
        thick,
        pattern=north east lines,  
        minimum height=0.1cm,  
        anchor=north  
    },
     damper/.style={
        draw, 
        thick,
        line width=1pt,
        minimum width=0.3cm,
        minimum height=0.2cm,
        inner sep=0pt,
        outer sep=0pt
    }
]

\def\wallsep{7cm}     
\def\masswidth{1.2cm} 
\def\massheight{0.8cm} 
\def\groundlevel{-1cm}  
\def\wheeloffset{0.3cm} 
\def\dampershift{0.3cm} 

\node (ground) [ground, minimum width=\wallsep, anchor=north] at (\wallsep/2, \groundlevel-5) {};
\draw[thick] (ground.north west) -- (ground.north east);

\node (left wall) [wall, anchor=east] at (0, \groundlevel+16.5) {};  
\draw[thick] (left wall.north east) -- (left wall.south east);  

\node (right wall) [wall, anchor=west] at (\wallsep, \groundlevel+16.5) {};  
\draw[thick] (right wall.north west) -- (right wall.south west);  

\node (mass1) [draw, thick, minimum width=\masswidth, minimum height=\massheight, anchor=south] at (2, \groundlevel) {$m_1$};
\fill (1.4,\massheight/2+\groundlevel+6) circle (1pt);  
\fill (2.6,\massheight/2+\groundlevel+6) circle (1pt);  
\fill (1.4,\massheight/2+\groundlevel-6.5) circle (1pt); 
\fill (2.6,\massheight/2+\groundlevel-6.5) circle (1pt); 

\fill (0,\massheight/2+\groundlevel+6) circle (1pt); 
\fill (0,\massheight/2+\groundlevel-6.5) circle (1pt); 
\fill (7,\massheight/2+\groundlevel+6) circle (1pt); 

\node [wheel] at ([xshift=-\wheeloffset]mass1.south) {};  
\node [wheel] at ([xshift=\wheeloffset]mass1.south) {};   

\node (mass2) [draw, thick, minimum width=\masswidth, minimum height=\massheight, anchor=south] at (5, \groundlevel) {$m_2$};
\fill (4.4,\massheight/2+\groundlevel+6) circle (1pt);  
\fill (5.6,\massheight/2+\groundlevel+6) circle (1pt);  
\fill (4.4,\massheight/2+\groundlevel-6.5) circle (1pt); 

\node [wheel] at ([xshift=-\wheeloffset]mass2.south) {};  
\node [wheel] at ([xshift=\wheeloffset]mass2.south) {};   

\draw[spring] (0,\massheight/2+\groundlevel+6) -- (1.4,\massheight/2+\groundlevel+6)
    node [midway, above=0.5mm] {\small $k_1$};

\draw[spring] (2.6,\massheight/2+\groundlevel+6) -- (4.4,\massheight/2+\groundlevel+6)
    node [midway, above=0.5mm] {\small $k_2$};

\draw[spring] (5.6,\massheight/2+\groundlevel+6) -- (7,\massheight/2+\groundlevel+6)
    node [midway, above=0.5mm] {\small $k_3$};

\coordinate (dampStart1) at ([yshift=-\dampershift]0,\massheight/2+\groundlevel+2);
\coordinate (dampEnd1) at ([yshift=-\dampershift]1.4,\massheight/2+\groundlevel+2);

\node [damper, anchor=center] (d1) at ($(dampStart1)!0.5!(dampEnd1)$) {};

\draw[thick] (dampStart1) -- (d1.west);
\draw[thick] (d1.east) -- (dampEnd1);

\draw[thick] (d1.north east) -- ++(0.1cm,0);
\draw[thick] (d1.south east) -- ++(0.1cm,0);

\node[below] at (1.1,-0.8) {\small $c_1$};

\coordinate (dampStart2) at ([yshift=-\dampershift]2.6,\massheight/2+\groundlevel+2);
\coordinate (dampEnd2) at ([yshift=-\dampershift]4.4,\massheight/2+\groundlevel+2);

\node [damper, anchor=center] (d2) at ($(dampStart2)!0.5!(dampEnd2)$) {};

\draw[thick] (dampStart2) -- (d2.west);
\draw[thick] (d2.east) -- (dampEnd2);

\draw[thick] (d2.north east) -- ++(0.1cm,0);

\draw[thick] (d2.south east) -- ++(0.1cm,0);

\node[below] at (1.1+2.8,-0.8) {\small $c_2$};

\draw[thick, ->,black] (2,-0.2+0.2) -- ++(0.8,0) 
    node[above] {$q_1$};
\draw[thick, ->, black] (2-0.8,-0.2+0.4) -- ++(0.8,0) 
    node[pos=0.2, above] {$f_1$};
\draw[thick, ->, black] (5,-0.2+0.2) -- ++(0.8,0) 
    node[above] {$q_2$};
\draw[thick, ->, black] (5-0.8,-0.2+0.4) -- ++(0.8,0) 
    node[pos=0.2, above] {$f_2$};

\draw[thick] (2,-0.2) -- (2,-0.2+0.6);

\draw[thick] (5,-0.2) -- (5,-0.2+0.6
);

\end{tikzpicture}
    \caption{A two-mass spring-damper system.}
    \label{fig:enter-label}
\end{figure}

\begin{figure*}[bthp]
    \centering
    \begin{subfigure}[t]{0.3\linewidth}
        \centering
        \includegraphics[width=\linewidth]{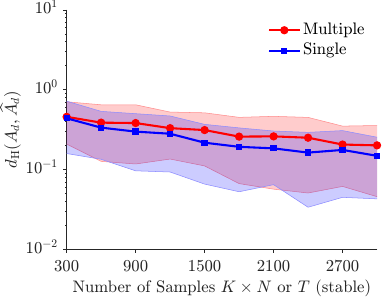}
        \caption{Stable system}
        \label{fig:stable}
    \end{subfigure}%
    \hfill
    \begin{subfigure}[t]{0.3\linewidth}
        \centering
        \includegraphics[width=\linewidth]{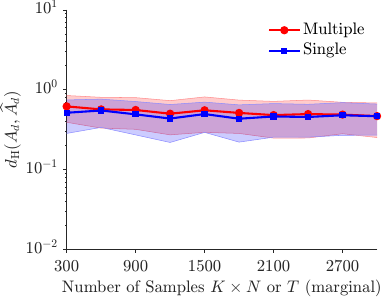}
        \caption{Marginally stable system}
        \label{fig:marginal}
    \end{subfigure}%
    \hfill
    \begin{subfigure}[t]{0.3\linewidth}
        \centering
        \includegraphics[width=\linewidth]{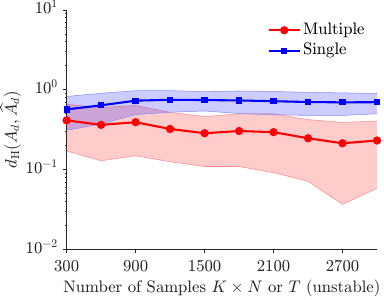}
        \caption{Unstable system}
        \label{fig:unstable}
    \end{subfigure}
    \hfill

    \begin{subfigure}[t]{0.25\linewidth}
        \centering
        \includegraphics[width=\linewidth]{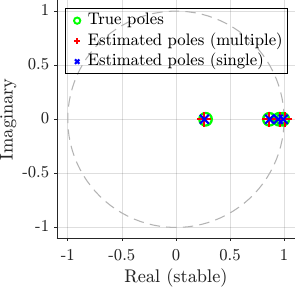}
        \caption{Stable system}
        \label{fig:stable_pole}
    \end{subfigure}%
    \hfill
    \begin{subfigure}[t]{0.25\linewidth}
        \centering
        \includegraphics[width=\linewidth]{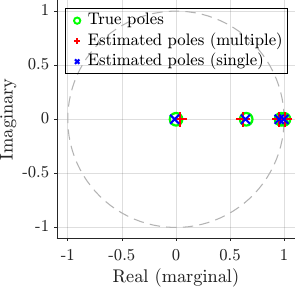}
        \caption{Marginally stable system}
        \label{fig:marginal_pole}
    \end{subfigure}%
    \hfill
    \begin{subfigure}[t]{0.25\linewidth}
        \centering
        \includegraphics[width=\linewidth]{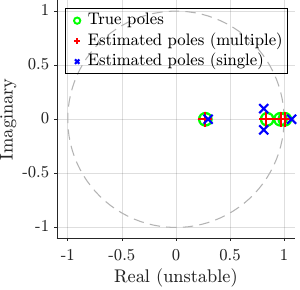}
        \caption{Unstable system}
        \label{fig:unstable_pole}
    \end{subfigure}
     
    \caption{Comparison of system pole estimation errors under different sample sizes: (a) stable system; (b) marginally stable system; (c) unstable system. In each subfigure, the red curve represents the Ho-Kalman algorithm from~\cite{zheng2020non}, which utilizes multiple trajectories (sample size $K \times N$); while the blue curve corresponds to the  Ho-Kalman algorithm from~\cite{oymak2021revisiting}, based on the single-trajectory (sample size $T$).
    Subfigures (d)–(f) depict, at the points where the estimation error attains its minimum, the distribution of the true poles (green `\(\circ\)') together with their estimates: red `\(+\)' for multiple trajectories and blue `\(\times\)' for the single-trajectory case, plotted in the complex plane.
    }
    \label{fig:all}
\end{figure*}

Let $q_1(t)$ and $q_2(t)$ denote the horizontal displacements of the masses $m_1$ and $m_2$, respectively. The control forces $f_1(t)$ and $f_2(t)$ are applied directly to the respective masses, while the system outputs are defined as their displacements. Applying Newton's second law yields the following equations of motion:
\begin{align}
    m_1 \ddot{q}_1 &= -k_1 q_1 + k_2(q_2 - q_1) - c_1 \dot{q}_1 + c_2(\dot{q}_2 - \dot{q}_1) + f_1(t), \\
    m_2 \ddot{q}_2 &= -k_2(q_2 - q_1) - k_3 q_2 - c_2(\dot{q}_2 - \dot{q}_1) + f_2(t).
\end{align}
Defining the state vector, control input vector and output vector as $x(t) = \begin{bmatrix}
        q_1(t) & \dot{q}_1(t) & q_2(t) & \dot{q}_2(t)
    \end{bmatrix}^\top$, $u(t) = \begin{bmatrix}
        f_1(t) & f_2(t)
    \end{bmatrix}^\top$, $y(t) = \begin{bmatrix}
        q_1(t)  & q_2(t) 
    \end{bmatrix}^\top$,
the system dynamics can be expressed in the continuous-time state-space form:
\begin{align}
    \dot{x}(t) &= A_cx(t) + B_cu(t), \\
    y(t) &= C_cx(t),
\end{align}
where 
$
A_c = \begin{bmatrix}
0 & 1 & 0 & 0 \\
-\frac{k_1 + k_2}{m_1} & -\frac{c_1 + c_2}{m_1} & \frac{k_2}{m_1} & \frac{c_2}{m_1} \\
0 & 0 & 0 & 1 \\
\frac{k_2}{m_2} & \frac{c_2}{m_2} & -\frac{k_2 + k_3}{m_2} & -\frac{c_2}{m_2}
\end{bmatrix}, \quad
B_c = \begin{bmatrix}
0 & \frac{1}{m_1} & 0 & 0\\
0 & 0 & 0 & \frac{1}{m_2}
\end{bmatrix}^\top, \quad
C_c = \begin{bmatrix}
1 & 0 & 0 & 0 \\
0 & 0 & 1 & 0
\end{bmatrix}
$.

To facilitate identification and digital implementation, the continuous-time system is converted to a discrete-time system using zero-order hold (ZOH) with a sampling period of $T_s = 0.1$ seconds. The resulting discrete-time state-space model is:
\begin{equation}\label{discrete-time system}
    \begin{aligned}
        x_{k+1} &= A_d x_k + B_d u_k + w_k, \\
        y_k &= C_d x_k + v_k,
    \end{aligned}
    \tag{\textsf{Test-system}}
\end{equation}
where $A_d = e^{A_c T_s}$, $B_d = \left( \int_0^{T_s} e^{A_c \tau} d\tau \right) B_c$, $C_d = C_c$, and $w_k \sim \mathcal{N}(\mathbf{0},10^{-4}\mathbf{I}_4)$, $v_k \sim \mathcal{N}(\mathbf{0},10^{-4}\mathbf{I}_2)$ are the i.i.d. process and measurement noise respectively.

For the~\eqref{discrete-time system}, we consider three different parameter sets corresponding to distinct stability categories—stable, marginally stable, and unstable.
\begin{itemize}
    \item Stable: $k_1 = 0.5$ N/m, $k_2 = 0.7$ N/m, $k_3 = 0.6$ N/m; $c_1 = c_2 = 5$ N$\cdot$s/m. Matrix $A_d$ has the spectrum $\lambda(A_d) = \{0.27,0.99,0.95,0.86\}$.
    \item Marginally stable: $k_1 = 0.5$ N/m, $k_2 = 0.7$ N/m, $k_3 = 0.6$ N/m; $c_1 = 60$ N$\cdot$s/m, $c_2 = 5$ N$\cdot$s/m. Matrix $A_d$ has the spectrum $\lambda(A_d) = \{0.001,0.65,0.97,1.00\}$.
    \item Unstable\footnote{This negative spring constant $k_1 = -0.7$ N/m emulates active control elements that generate displacement-dependent forces in phase with motion, inducing positive feedback.}: $k_1 = -0.7$ N/m, $k_2 = 0.7$ N/m, $k_3 = 0.6$ N/m; $c_1 = c_2 = 5$ N$\cdot$s/m. Matrix $A_d$ has the spectrum $\lambda(A_d) = \{0.27,0.84,0.97,1.01\}$.
\end{itemize}

In the experiments, we set the input (control force) $u_k \sim \mathcal{N}(\mathbf{0},\mathbf{I}_2)$. For the multiple-trajectory setting, each trajectory length is fixed at $K = 15$, while the number of trajectories, $N$, varies from $20$ to $200$, resulting in a sample size $K \times N$ ranging from $300$ to $3000$. In the single-trajectory case, the trajectory length (sample size), $T$, increases from $300$ to $3000$. The parameters \( K_1 \) and \( K_2 \) of the Hankel matrix are set to $8$ and $6$, respectively. Each scenario is evaluated over $100$ independent trials.  

We evaluate the estimation error of system poles using the Hausdorff distance metric and visualize the results using shade error bar, as shown in~\autoref{fig:all}. In each subfigure, the red curve represents the Ho-Kalman algorithm from~\cite{zheng2020non}, which utilizes multiple trajectories  (sample size $K \times N$); while the blue curve corresponds to the Ho-Kalman algorithm from~\cite{oymak2021revisiting}, which is based on a single trajectory  (sample size $T$).

As shown in~\autoref{fig:stable}, the estimation error of system poles in the stable system decreases as the sample size increases. 
In~\autoref{fig:marginal}, the estimation error decreases more slowly for the marginally stable system. This is because the slow mode with a pole of $0.001$ contributes weakly to the output and is easily masked by noise, making it difficult to identify accurately with limited data.
\autoref{fig:unstable} shows that for the unstable system, error decreases with multiple trajectories, but does not decrease with a single trajectory, as the single-trajectory Ho-Kalman algorithm is not suitable for unstable systems~\cite{oymak2021revisiting}. 

On the other hand, \autoref{fig:stable_pole},~\autoref{fig:marginal_pole} and~\autoref{fig:stable_pole} respectively illustrate, for the stable, marginally stable, and unstable systems, the distribution of the true poles (green `\(\circ\)') together with their estimates, plotted in the complex plane at the points where the estimation error attains its minimum. In particular, red `\(+\)' denotes the estimates obtained from multiple trajectories, whereas blue `\(\times\)' denotes those obtained from a single trajectory.




\section{Conclusion}
\label{sec:conclusions}
This paper presents an error analysis of system pole estimation for discrete-time MIMO LTI systems using the Ho-Kalman algorithm with finite input/output sample data. Building upon prior works, we derive high-probability error bounds for system pole estimation. Specifically, we prove that the estimation error for an $n$-dimensional system decays at a rate of at least $\mathcal{O}(T^{-1/2n})$ in the single-trajectory setting with trajectory length $T$, and at least $\mathcal{O}(N^{-1/2n})$ in the multiple-trajectory setting with $N$ independent trajectories. Additionally, we demonstrated that achieving a constant estimation error for system poles requires a
super-polynomial sample size in \( \max\{n/m, n/p\} \), where $n/m$ and $n/p$ denote the state-to-output and state-to-input dimension ratios, respectively. In the future, we will explore the results in the continuous-time setting.

\section*{Appendix}
\label{sec:appendix}

\subsection{Some lemmas}

\begin{theorem}[Theorem 1 in \cite{elsner1985optimal}]\label{perturbation of eigenvalues}
	Given $\mathcal{A} = (\alpha_{ij}) \in \mathbb{C}^{n \times n}$, $\mathcal{B} = (\beta_{ij}) \in \mathbb{C}^{n \times n}$, suppose that $\lambda(\mathcal{A}) =  \{\lambda_1(\mathcal{A}), \cdots, \lambda_n(\mathcal{A})\}$ and $\lambda(\mathcal{B}) = \{\mu_1(\mathcal{B}), \cdots, \mu_n(\mathcal{B})\}$ are the spectra of matrices $A$ and $B$ respectively, then
	\begin{equation}
		{\rm sv}_{\mathcal{A}}(\mathcal{B}) \leq  \left( \| \mathcal{A} \| + \|\mathcal{B}\| \right)^{1-\frac{1}{n}} \| \mathcal{A} - \mathcal{B} \|^{\frac{1}{n}}.
	\end{equation}
\end{theorem}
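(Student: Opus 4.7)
The plan is to give the classical Hadamard/determinant argument originally used by Elsner. Fix any $j$ and set $\mu = \mu_j(\mathcal{B})$, and let $v \in \mathbb{C}^n$ be a unit eigenvector with $\mathcal{B} v = \mu v$. The key observation is that
\begin{equation}
(\mu \mathbf{I} - \mathcal{A}) v \;=\; (\mu \mathbf{I} - \mathcal{B}) v + (\mathcal{B} - \mathcal{A}) v \;=\; (\mathcal{B} - \mathcal{A}) v,
\end{equation}
so $\|(\mu \mathbf{I} - \mathcal{A}) v\| \leq \|\mathcal{A} - \mathcal{B}\|$. This turns ``$\mu$ is an eigenvalue of $\mathcal{B}$'' into the quantitative statement that $\mu\mathbf{I} - \mathcal{A}$ is nearly singular on the direction $v$.

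Next I would extend $v$ to an orthonormal basis $\{v, v_2, \ldots, v_n\}$ of $\mathbb{C}^n$ and apply Hadamard's inequality to the matrix $\mu \mathbf{I} - \mathcal{A}$ expressed in this basis, which yields
\begin{equation}
\bigl|\det(\mu \mathbf{I} - \mathcal{A})\bigr| \;\leq\; \|(\mu \mathbf{I} - \mathcal{A}) v\| \cdot \prod_{k=2}^{n} \|(\mu \mathbf{I} - \mathcal{A}) v_k\|.
\end{equation}
For the remaining $n-1$ columns I would use the trivial bound $\|(\mu \mathbf{I} - \mathcal{A}) v_k\| \leq |\mu| + \|\mathcal{A}\| \leq \|\mathcal{B}\| + \|\mathcal{A}\|$, where $|\mu| \leq \|\mathcal{B}\|$ since $\mu$ is an eigenvalue of $\mathcal{B}$. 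Combined with the first step, this produces
\begin{equation}
\bigl|\det(\mu \mathbf{I} - \mathcal{A})\bigr| \;\leq\; \|\mathcal{A} - \mathcal{B}\| \cdot \bigl(\|\mathcal{A}\| + \|\mathcal{B}\|\bigr)^{n-1}.
\end{equation}

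To finish, I would rewrite the determinant as $\det(\mu \mathbf{I} - \mathcal{A}) = \prod_{i=1}^n (\mu - \lambda_i(\mathcal{A}))$, so that
\begin{equation}
\Bigl(\min_{1 \leq i \leq n} |\mu - \lambda_i(\mathcal{A})|\Bigr)^{n} \;\leq\; \prod_{i=1}^{n} |\mu - \lambda_i(\mathcal{A})| \;\leq\; \|\mathcal{A} - \mathcal{B}\| \cdot \bigl(\|\mathcal{A}\| + \|\mathcal{B}\|\bigr)^{n-1}.
\end{equation}
Taking $n$-th roots gives $\min_i |\mu - \lambda_i(\mathcal{A})| \leq (\|\mathcal{A}\| + \|\mathcal{B}\|)^{1-1/n} \|\mathcal{A} - \mathcal{B}\|^{1/n}$, and since the right-hand side does not depend on the particular $j$ chosen, taking the maximum over $j$ yields the stated bound on ${\rm sv}_{\mathcal{A}}(\mathcal{B})$.

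There is no serious obstacle: the entire argument hinges on the clean identity $(\mu\mathbf{I} - \mathcal{A})v = (\mathcal{B} - \mathcal{A})v$ and on Hadamard's inequality. The only subtle point worth double-checking is that Hadamard's inequality for a square matrix $M$ reads $|\det M| \leq \prod_k \|M e_k\|$ in the standard basis; applying it with respect to a different orthonormal basis $\{v, v_2, \ldots, v_n\}$ is legitimate because the determinant is invariant (up to a unit-modulus factor) under unitary change of basis of the domain, so I would briefly justify that rewriting before invoking the inequality. Everything else is routine.
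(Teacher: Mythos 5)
Your argument is correct and is precisely Elsner's original proof of this result (the identity $(\mu\mathbf{I}-\mathcal{A})v=(\mathcal{B}-\mathcal{A})v$ for a unit eigenvector $v$ of $\mathcal{B}$, Hadamard's inequality in the orthonormal basis extending $v$, and the factorization of $\det(\mu\mathbf{I}-\mathcal{A})$ over the eigenvalues of $\mathcal{A}$). The paper itself does not prove this theorem but simply imports it by citation, so there is nothing to compare beyond noting that your reconstruction matches the cited source's argument.
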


Based on the definition of the Hausdorff distance and Theorem~\ref{perturbation of eigenvalues}, the following corollary can be directly derived.

\begin{corollary}\label{hausdorff_distance}
	Given $\mathcal{A} = (\alpha_{ij}) \in \mathbb{C}^{n \times n}$, $\mathcal{B} = (\beta_{ij}) \in \mathbb{C}^{n \times n}$, suppose that $\lambda(\mathcal{A}) =  \{\lambda_1(\mathcal{A}), \cdots, \lambda_n(\mathcal{A})\}$ and $\lambda(\mathcal{B}) = \{\mu_1(\mathcal{B}), \cdots, \mu_n(\mathcal{B})\}$ are the spectra of matrix $A$ and $B$ respectively, then the Hausdorff distance between the spectra of matrix $\mathcal{A}$ and $\mathcal{B}$ satisfy
	\begin{equation}
		d_{\rm H}(\mathcal{A},\mathcal{B}) \leq  \left( \| \mathcal{A} \| + \|\mathcal{B}\| \right)^{1-\frac{1}{n}} \| \mathcal{A} - \mathcal{B} \|^{\frac{1}{n}}.
	\end{equation}
\end{corollary}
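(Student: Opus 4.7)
The plan is a one-step reduction: invoke Theorem~\ref{perturbation of eigenvalues} symmetrically in the two arguments and combine via the definition of the Hausdorff distance $d_{\rm H}(\mathcal{A},\mathcal{B}) = \max\{{\rm sv}_{\mathcal{A}}(\mathcal{B}),\,{\rm sv}_{\mathcal{B}}(\mathcal{A})\}$.

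First I would apply Theorem~\ref{perturbation of eigenvalues} directly to the pair $(\mathcal{A},\mathcal{B})$ to obtain the one-sided bound
\[
{\rm sv}_{\mathcal{A}}(\mathcal{B}) \le \bigl(\|\mathcal{A}\|+\|\mathcal{B}\|\bigr)^{1-1/n}\|\mathcal{A}-\mathcal{B}\|^{1/n}.
\]
Second, since Theorem~\ref{perturbation of eigenvalues} is valid for any ordered pair of $n\times n$ complex matrices, I would re-apply it after interchanging the roles of $\mathcal{A}$ and $\mathcal{B}$ to bound ${\rm sv}_{\mathcal{B}}(\mathcal{A})$. The key observation is that the resulting right-hand side is \emph{identical} to the previous one, because $\|\mathcal{A}\|+\|\mathcal{B}\|$ and $\|\mathcal{A}-\mathcal{B}\|=\|\mathcal{B}-\mathcal{A}\|$ are both symmetric in the two matrices. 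Consequently both one-sided spectral variations share the same upper bound.

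Third, the conclusion follows by taking the maximum: since both ${\rm sv}_{\mathcal{A}}(\mathcal{B})$ and ${\rm sv}_{\mathcal{B}}(\mathcal{A})$ are dominated by the same quantity, so is their maximum, which by definition equals $d_{\rm H}(\mathcal{A},\mathcal{B})$.

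There is no real obstacle here; the entire substantive content of the corollary lies in Theorem~\ref{perturbation of eigenvalues} (Elsner's optimal perturbation bound). The corollary merely lifts Elsner's one-sided estimate to the two-sided Hausdorff metric, and the lift is immediate because the Elsner bound is already symmetric in its arguments. The only things to verify are the two symmetries $\|\mathcal{A}\|+\|\mathcal{B}\|=\|\mathcal{B}\|+\|\mathcal{A}\|$ and $\|\mathcal{A}-\mathcal{B}\|=\|\mathcal{B}-\mathcal{A}\|$, both of which are properties of the spectral norm.
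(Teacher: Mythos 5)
Your proposal is correct and matches the paper's intended argument: the paper states the corollary follows directly from the definition of $d_{\rm H}$ and Theorem~\ref{perturbation of eigenvalues}, which is precisely your symmetric two-sided application of Elsner's bound followed by taking the maximum. Nothing further is needed.
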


This corollary indicates that the Hausdorff distance between the spectra of matrix $\mathcal{A}$ and $\mathcal{B}$ can be controlled by the distance between matrix $\mathcal{A}$ and $\mathcal{B}$ in the sense of matrix norm.



\begin{lemma}[Lemma 4 in~\cite{sun2023finite}]\label{krylov matrix}
	Given a $ n \times mp $ ($ p \leq  n $) matrix $ X_{n,mp} $ satisfying 
	\begin{equation}
		X_{n,mp} = \begin{bmatrix}
			W_p &
			DW_p &
			\cdots &
			D^{m-1}W_p
		\end{bmatrix},
	\end{equation}	
	where $ W_p \in \mathbb{R}^{n \times p} $ and $ D $ is a normal matrix with real eigenvalues, then the smallest singular value of $ X_{n,mp} $ satisfies
	\begin{equation}\label{krylov matrix_result}
		\sigma_{\min}(X_{n,mp}) \leq 4 \rho_{}^{-\frac{\left\lfloor \frac{\min\{n,m(p-[p]_*)\}-1}{2p} \right\rfloor}{\log (2mp)}} \|X_{n,mp}\|, 
	\end{equation}
	where $ \rho \triangleq e^{\frac{\pi^2}{4}} $, and $ [p]_* = 0 $ if $ p $ is even or $ p = 1 $ and is $ 1 $ if $ p $ is an odd number greater than $ 1 $.
\end{lemma}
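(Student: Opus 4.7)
The plan is to reduce the claim to a univariate polynomial extremal problem on the spectrum of $D$, and then invoke a Chebyshev-type quantitative estimate to produce a test vector $v$ along which $\|X_{n,mp}v\|$ is small.

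\textbf{Step 1 (spectral reduction).} Because $D$ is normal with real spectrum, the spectral theorem provides an orthogonal $U$ with $D = U\Lambda U^{\top}$, $\Lambda = \mathrm{diag}(\lambda_1,\ldots,\lambda_n)$. Left-multiplying $X_{n,mp}$ by $U^{\top}$ preserves both $\sigma_{\min}$ and the spectral norm, while replacing $D$ by $\Lambda$ and $W_p$ by $\tilde W_p := U^{\top}W_p$. So we may assume throughout that $D = \Lambda$ and work with $X = [\tilde W_p,\; \Lambda \tilde W_p,\; \ldots,\; \Lambda^{m-1}\tilde W_p]$.

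\textbf{Step 2 (reduction to a scalar polynomial bound).} For any scalar sequence $(c_0,\ldots,c_{m-1})$ and any $w\in\mathbb{R}^p$, the block vector $v := (c_0w;\,c_1w;\,\ldots;\,c_{m-1}w)\in\mathbb{R}^{mp}$ satisfies $Xv = q(\Lambda)\tilde W_p w$, where $q(z):=\sum_{j=0}^{m-1}c_jz^j$. Since $\Lambda$ is diagonal, $\|q(\Lambda)\tilde W_p w\|\leq(\max_i|q(\lambda_i)|)\,\|\tilde W_p\|\,\|w\|$, while $\|v\|=\|c\|_2\|w\|$. Moreover, $\tilde W_p$ is a column block of $X$, hence $\|\tilde W_p\|\leq\|X_{n,mp}\|$. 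Therefore it suffices to construct a polynomial $q$ of degree $\leq m-1$ satisfying
\[
\max_{1\leq i\leq n}|q(\lambda_i)| \;\leq\; 4\,\rho^{-\alpha}\,\|c\|_2,\qquad \alpha \;=\; \frac{\bigl\lfloor(\min\{n,m(p-[p]_\ast)\}-1)/(2p)\bigr\rfloor}{\log(2mp)}.
\]

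\textbf{Step 3 (polynomial construction).} This is the crux of the argument. Let $r=\mathrm{spr}(D)$, so that all $\lambda_i\in[-r,r]$. Partition the available degrees into groups of size $2p$ (the correction $[p]_\ast$ handles odd $p>1$ by leaving one block unpaired and contributing the appearance of $p-[p]_\ast$ rather than $p$), and associate each group with a cell in a partition of the $\lambda_i$'s on $[-r,r]$. Inside each cell, use a rescaled Chebyshev polynomial as a factor; assembling these into a product yields a polynomial $q$ of degree $\leq m-1$. A Bernstein--Walsh / conformal map estimate applied to each Chebyshev factor on its cell, measured relative to an enclosing ellipse, delivers a per-cell shrinkage of the max-to-coefficient ratio of $\rho^{-1/\log(2mp)}$ with $\rho = e^{\pi^2/4}$; the $\log(2mp)$ in the denominator arises as a Lebesgue-constant style accounting over the $2mp$ interpolation nodes. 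Taking the product over the $\lfloor(\min\{n,m(p-[p]_\ast)\}-1)/(2p)\rfloor$ relevant cells reproduces the stated exponent, and the leading $4$ absorbs the conversion between uniform and coefficient norms and the passage from $\|\tilde W_p\|$ to $\|X_{n,mp}\|$.

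\textbf{Main obstacle.} The genuinely hard step is Step 3: producing both the base $\rho=e^{\pi^2/4}$ and the denominator $\log(2mp)$ simultaneously. The Chebyshev product must be calibrated so that the Bernstein--Walsh bound on each factor is matched against the cell count; any looseness in the cell widths or in the Lebesgue-constant control degrades one of the two constants. Steps 1 and 2 are by comparison routine linear algebra. I would also expect the parity bookkeeping for odd $p>1$ (encoded by $[p]_\ast$) to require a short separate accounting, since one "orphan" power in $\{0,\ldots,m-1\}$ cannot be absorbed into a pair and must instead be handled as a single extra scalar factor with trivial per-cell estimate.
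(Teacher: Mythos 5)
First, note that the paper does not actually prove this lemma: it is imported verbatim as Lemma 4 of \cite{sun2023finite}, so there is no in-paper proof to compare against, and your proposal must be judged on its own terms. It has a genuine gap at Step 2, before the Chebyshev construction even begins. In this paper $\sigma_{\min}(X_{n,mp})$ denotes the smallest \emph{nonzero} singular value (see the Notations paragraph), i.e.\ $\sigma_{r}(X_{n,mp})$ with $r=\operatorname{rank}(X_{n,mp})$, and in the only place the lemma is used (the proof of Theorem~\ref{ill-conditioned}) it is applied to $O^{\top}$ and $Q^{-}$ with $K_1,K_2\geq n$, so the matrix is wide ($mp>n$) with full row rank $n$ and the quantity to be bounded is $\sigma_n$. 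A single unit test vector $v$ with $\|X v\|\leq\varepsilon$ only certifies that the smallest singular value \emph{including zeros}, $\sigma_{mp}(X)$, is at most $\varepsilon$ --- and $\sigma_{mp}(X)=0$ automatically once $mp>n$, so this says nothing about $\sigma_n(X)$. By Courant--Fischer you would need an entire $(mp-n+1)$-dimensional subspace on which $X$ is uniformly of norm at most $\varepsilon$; your construction produces one vector (or, varying $w$ over a basis, at most $p$ of them), not $mp-n+1$.

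Second, the shape of the exponent is itself evidence that the intended argument is not a single Chebyshev product. The combination of the prefactor $4$, the base $\rho=e^{\pi^2/4}$, the denominator $\log(2mp)$, and above all the floor of $(\min\{n,m(p-[p]_*)\}-1)/(2p)$ is the signature of the displacement-structure/Zolotarev-number machinery of Beckermann and Townsend: the block Krylov matrix satisfies a Sylvester equation with displacement rank $2p$, whence $\sigma_{1+2pk}(X)\leq Z_k(E,F)\,\sigma_1(X)$ with $Z_k(E,F)\leq 4\rho^{-k/\log(2mp)}$, and choosing $k=\lfloor(\operatorname{rank}(X)-1)/(2p)\rfloor$ yields exactly the stated bound (the $[p]_*$ term being parity bookkeeping). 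In your reduction the degree budget is $m-1$ and the node set has $n$ points, so there is no natural source for the $2p$ in the denominator of the exponent; that step size comes precisely from the displacement rank, which your argument never introduces. Repairing the proposal would require constructing a whole subspace of polynomial test vectors and controlling $X$ restricted to it, which in effect re-derives the displacement-rank bound rather than shortcutting it.
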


\begin{lemma}\label{upper bound of H and F}
If the system \eqref{linear_system} is stable (or marginally stable), then the observability matrix $O$ defined in~\eqref{O}, the controllability matrix $Q$ defined in~\eqref{Q}, the Hankel matrix $H$ defined in~\eqref{H}, and the matrix $F$ defined in~\eqref{F} satisfy
    \begin{equation}
        \begin{aligned}
        \| O \| &\leq  \overline{c} \sqrt{K_1mn}, \qquad
        \| Q \| \leq \overline{b} \sqrt{(K_2+1)pn}, \\
        \| H \| &\leq \frac{1}{2}\,\overline{\delta} n \sqrt{pm}\,K, \quad
        \| F \| \leq \overline{c} \sqrt{mnK},
        \end{aligned}
    \end{equation}
where $\overline{c} \triangleq \max_{i,j} |c_{ij}|$ and $\overline{b} \triangleq \max_{i,j} |b_{ij}|$ denote the maximum absolute entries of the matrices $C$ and $B$, respectively, and $\overline{\delta} \triangleq \left(\max_{i,j}|b_{ij}|\right)\left(\max_{i,j}|c_{ij}|\right)$.
\end{lemma}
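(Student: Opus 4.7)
The plan is to establish all four bounds by exploiting the block structure of the matrices $O$, $Q$, $H$, $F$, together with sub-multiplicativity of the spectral norm, the compatibility $\|X\|\le\|X\|_F$, and the power-boundedness $\|A^i\|\le 1$ delivered by (marginal) stability. I would start with $O$. Since
\begin{equation*}
O^\top O=\sum_{i=0}^{K_1-1}(CA^i)^\top(CA^i),
\end{equation*}
the triangle inequality gives $\|O\|^2\le\sum_{i=0}^{K_1-1}\|CA^i\|^2$; each term is bounded via $\|CA^i\|\le\|C\|\,\|A^i\|\le\|C\|_F\,\|A^i\|\le\overline{c}\sqrt{mn}$, using entrywise boundedness $|c_{ij}|\le\overline{c}$ to get $\|C\|_F^2\le\overline{c}^{\,2}mn$. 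Summing $K_1$ such terms yields $\|O\|\le\overline{c}\sqrt{K_1 mn}$. The bound for $\|F\|$ follows by an identical argument: the leading block of $F$ is zero, and the remaining $K-1\le K$ blocks $CA^i$ for $i=0,\dots,K-2$ produce $\|F\|\le\overline{c}\sqrt{mnK}$.

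For $\|Q\|$ the symmetric block-row identity $QQ^\top=\sum_{i=0}^{K_2}(A^iB)(A^iB)^\top$ leads, by the same argument, to $\|Q\|^2\le(K_2+1)\|B\|_F^2\le\overline{b}^{\,2}(K_2+1)np$, hence $\|Q\|\le\overline{b}\sqrt{(K_2+1)pn}$. For $\|H\|$ I would use the factorization $H=OQ$ already recorded in~\eqref{H}, together with sub-multiplicativity,
\begin{equation*}
\|H\|\le\|O\|\,\|Q\|\le\overline{\delta}\,n\sqrt{pm\,K_1(K_2+1)}.
\end{equation*}
The stated form of the bound then falls out of the AM-GM inequality $\sqrt{K_1(K_2+1)}\le (K_1+K_2+1)/2=K/2$, which simultaneously accounts for the $1/2$ prefactor and the single power of $K$ on the right-hand side.

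The step I expect to be the main obstacle is justifying $\|A^i\|\le 1$ under the weak hypothesis of (marginal) stability, since in general $\|A\|\ge\spr(A)$ and the naive bound $\|A^i\|\le\|A\|^i$ can blow up even when $\spr(A)\le 1$. One route is to strengthen the assumption to the setting actually used downstream in Theorem~\ref{ill-conditioned}, where $A$ is diagonalizable with real eigenvalues satisfying $|\lambda_j(A)|\le 1$ and a well-conditioned eigenbasis; another is to absorb the power-boundedness constant $C_A\triangleq\sup_{i\ge 0}\|A^i\|$ into the prefactor and carry it uniformly through all four bounds. Once this single step is settled the remaining computations are entirely mechanical.
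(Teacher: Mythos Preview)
Your proposal is correct and mirrors the paper's proof almost exactly: bound the spectral norm by the Frobenius norm block-wise, invoke $\|A^k\|\le 1$, and apply AM--GM to $K_1(K_2+1)$; the only cosmetic difference is that you derive the $\|H\|$ bound via the factorization $H=OQ$, whereas the paper bounds $\|H\|_F$ directly from its block entries $CA^{k_1+k_2}B$, and both routes land on the same inequality. Your reservation about $\|A^i\|\le 1$ is well-placed---the paper simply asserts it as a consequence of (marginal) stability without further justification, so the issue you flag is shared by the original proof rather than a defect of your own argument.
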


\begin{proof}
We first derive an upper bound for $\| O \|$. Since the stability (or marginal stability) of $A$ implies $\| A^k \| \leq 1$ for all $k$, it follows that
\begin{equation}
\begin{aligned}
    \| O \|^2 &\leq \| O \|_{\mathrm{F}}^2 
    = \sum_{k = 0}^{K_1-1} \| C A^k \|_{\mathrm{F}}^2 
    \leq \sum_{k = 0}^{K_1-1} \| C \|_{\mathrm{F}}^2 \| A^k \|^2 \\
    &\leq \sum_{k = 0}^{K_1-1} \overline{c}^2 mn
    = \overline{c}^2 K_1 mn,
\end{aligned}
\end{equation}
which yields $\| O \| \leq \overline c \sqrt{K_1 mn}$.  
The estimate for $Q$ follows analogously, giving $\| Q \| \leq \overline b \sqrt{(K_2+1) pn}$, and the same reasoning applied to $F$ yields $\| F \| \leq \overline c \sqrt{mnK}$.

For $H$, by norm inequalities we obtain
\begin{equation}
\begin{aligned}
    \| H \|^2 &\leq \| H \|_{\mathrm{F}}^2 = \sum_{k_1 = 0}^{K_1 - 1} \sum_{k_2 = 0}^{K_2} \left\| C A^{k_1 + K_2} B \right\|_{\mathrm{F}}^2 \\
    &\leq \sum_{k_1 = 0}^{K_1 - 1} \sum_{k_2 = 0}^{K_2} \| C \|_{\mathrm{F}}^2 \| B \|_{\mathrm{F}}^2 \| A^{k_1 + K_2} \|^2 \\
    &\leq \sum_{k_1 = 0}^{K_1 - 1} \sum_{k_2 = 0}^{K_2} \overline{\delta}^2 n^2 pm
    \leq \frac{1}{4} \, \overline{\delta}^2 n^2 pm K^2,
\end{aligned}
\end{equation}
where the third inequality uses the assumption that \( A \) is stable or marginally stable, so \( \| A^k \| \leq 1 \) for all \( k \), and the final inequality follows from the arithmetic–geometric mean bound: $K_1(K_2 + 1) \leq \left( \frac{K_1 + K_2 + 1}{2} \right)^2 = \frac{K^2}{4}$.
Taking square roots on both sides yields the desired bound on \( \| H \| \).
\end{proof}

\begin{lemma}[Lemma V.1 in~\cite{oymak2021revisiting}]\label{lemma_1}
	The Hankel matrices \( H \), \( \widehat{H} \), and $ L$, $\widehat{L}$ defined in Subsection~\ref{subsec:ho-kalman}, satisfy the following perturbation conditions:
	\begin{multline}
		\max \left\{\| H^+ - \widehat{H}^+ \|, \| H^- - \widehat{H}^- \| \right\} \leq \| H - \widehat{H} \| \\
		\leq \sqrt{\min\{K_1,K_2+1\}} \| G - \widehat{G}\|,
	\end{multline}
	and
	\begin{multline}
			\| L - \widehat{L} \| \leq 2\| H^- - \widehat{H}^- \|  \leq 2 \sqrt{\min\{K_1,K_2\}} \| G - \widehat{G}\|.
	\end{multline}
\end{lemma}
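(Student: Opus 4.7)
The plan is to split the lemma into two parts and handle them with elementary submatrix and sliding-window arguments.

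For the first chain of inequalities, the bound $\max\{\|H^+ - \widehat{H}^+\|,\|H^- - \widehat{H}^-\|\}\leq \|H-\widehat{H}\|$ is immediate: both $H^\pm$ and $\widehat{H}^\pm$ are obtained from $H$ and $\widehat{H}$ by deleting the same block column, and for any matrix $M$ the spectral norm of a submatrix is at most $\|M\|$. For the non-trivial bound $\|H-\widehat{H}\|\le\sqrt{\min\{K_1,K_2+1\}}\,\|G-\widehat{G}\|$, I would observe that the $r$-th block row of $H$ consists of the consecutive Markov parameters $CA^{r-1}B,CA^{r}B,\ldots,CA^{r-1+K_2}B$, which is precisely a contiguous column-block selection $G(:,\,rp+1:(K_2+r+1)p)$ of $G$ (and likewise for $\widehat{H}$). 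Consequently, for any unit vector $v\in\mathbb{R}^{(K_2+1)p}$, writing $\tilde v_r$ for the zero-padded lift of $v$ to $\mathbb{R}^{Kp}$ supported on the columns selected for block row $r$, one has
\begin{equation*}
\|(H-\widehat{H})v\|^2 = \sum_{r=1}^{K_1} \|(G-\widehat{G})\tilde v_r\|^2 \le K_1\,\|G-\widehat{G}\|^2 .
\end{equation*}
Applying the dual argument along block columns yields the same bound with $K_2+1$ in place of $K_1$, and taking the minimum proves the first inequality.

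For the second chain of inequalities, the key observation is that, since Assumption~\ref{observable and controllable} forces the realization to be minimal of order $n$ and we take $K_1,K_2\geq n$, the factorization $H=OQ$ from \eqref{H} together with the submatrix structure gives $\text{rank}(H^-)=n$. Hence $H^-$ coincides with its own best rank-$n$ approximation, i.e.\ $L=H^-$. Using the triangle inequality and the optimality of $\widehat{L}$ as the best rank-$n$ approximant of $\widehat{H}^-$, with $H^-$ as a rank-$n$ competitor, we get
\begin{equation*}
\|L-\widehat{L}\|=\|H^- - \widehat{L}\|\le \|H^- - \widehat{H}^-\| + \|\widehat{H}^- - \widehat{L}\| \le 2\|H^- - \widehat{H}^-\|.
\end{equation*}
Finally, repeating the sliding-window computation above for the matrix $H^-$, whose block dimensions are $K_1\times K_2$, yields $\|H^- - \widehat{H}^-\|\le \sqrt{\min\{K_1,K_2\}}\,\|G-\widehat{G}\|$, which closes the second chain.

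The only step that demands any care is the rank claim $\text{rank}(H^-)=n$ that underwrites $L=H^-$; everything else is bookkeeping. Under the stated Assumption~\ref{observable and controllable} and the convention $K_1,K_2\ge n$, the observability matrix built from $K_1$ blocks and the controllability matrix built from $K_2$ blocks both have full column/row rank $n$, and the factorization $H^-=O\,[B\ AB\ \cdots\ A^{K_2-1}B]$ gives rank exactly $n$, so this step is routine. The sliding-window bound is slightly subtle in that one must be careful about the indexing of block rows versus block columns to obtain $K_2+1$ (resp.\ $K_2$) for $H$ (resp.\ $H^-$), but once that is set up the estimate is a direct application of the submatrix spectral-norm inequality.
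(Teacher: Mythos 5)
Your first chain's submatrix bound, the row sliding-window computation giving the factor $\sqrt{K_1}$, and the first inequality of the second chain are all sound; in particular, the observation that $\operatorname{rank}(H^-)=n$ under Assumption~\ref{observable and controllable} with $K_1,K_2\ge n$, so that $L=H^-$, combined with the triangle inequality and the Eckart--Young optimality of $\widehat{L}$, is exactly the standard argument behind the cited Lemma V.1 of \cite{oymak2021revisiting} (which this paper imports by citation rather than reproving).

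The genuine gap is the claimed ``dual argument along block columns.'' Write $\Delta_j$ for the error in the $j$-th Markov parameter, so that $G-\widehat{G}=[\Delta_0\ \Delta_1\ \cdots\ \Delta_{K-1}]$ and the block in position $(r,c)$ of $H-\widehat{H}$ is $\Delta_{r+c-1}$. A block \emph{row} of $H-\widehat{H}$ is a contiguous column selection of $G-\widehat{G}$, which is why your row computation is valid; but a block \emph{column} is the vertical stack $[\Delta_c^\top\ \Delta_{c+1}^\top\ \cdots\ \Delta_{c+K_1-1}^\top]^\top$, which is \emph{not} a submatrix of the wide matrix $G-\widehat{G}$. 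Transposing your sliding-window computation therefore only yields $\|H-\widehat{H}\|\le\sqrt{K_2+1}\,\big\|[\Delta_1^\top\ \cdots\ \Delta_{K-1}^\top]^\top\big\|$, and the norm of this vertically stacked error is not controlled by $\|G-\widehat{G}\|$ (it can exceed it by a factor as large as $\sqrt{K-1}$). In fact this step cannot be closed from $\|G-\widehat{G}\|$ alone: take $p=1$, $m\ge K-1$, $\Delta_0=\mathbf{0}$ and $\Delta_j=\delta e_j$ with orthonormal $e_j\in\mathbb{R}^m$; then $\|G-\widehat{G}\|=\delta$, while the block columns of $H-\widehat{H}$ are mutually orthogonal, each of norm $\delta\sqrt{K_1}$, so $\|H-\widehat{H}\|=\delta\sqrt{K_1}$, which exceeds $\sqrt{K_2+1}\,\|G-\widehat{G}\|$ whenever $K_1>K_2+1$. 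Hence your proof, as written, establishes the perturbation bounds with $\sqrt{K_1}$ in place of $\sqrt{\min\{K_1,K_2+1\}}$ (and in place of $\sqrt{\min\{K_1,K_2\}}$ for $H^-$); the $\min$ form quoted from \cite{oymak2021revisiting} is not recoverable by the dual submatrix trick, so you should either keep only the $\sqrt{K_1}$ factor (which is all the downstream rate analysis needs, since $K_1$ and $K_2$ are taken comparable to $K/2$) or phrase the column-direction bound in terms of the stacked error matrix rather than $\|G-\widehat{G}\|$.
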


\begin{theorem}[Theorem V.2 in~\cite{oymak2021revisiting}]\label{lemma_2}
	For the \eqref{linear_system}, under the conditions of Assumption \ref{observable and controllable} and \ref{system order is known}, let $H$ and $\widehat{H}$ be the Hankel matrices constructed from the Markov parameter matrices $G$ and $\widehat{G}$ with multiple trajectories $\{(u_k^{(i)},y_k^{(i)})\mid 0 \leq k \leq K-1, 1 \leq i \leq N\}$, respectively, via~\eqref{H}. Let $\overline{A}, \overline{B}, \overline{C},\overline{D}$ be the the state-space realization corresponding to the output of Ho–Kalman algorithm with input $G$ and $\widehat{A}, \widehat{B}, \widehat{C},\widehat{D}$ be the state-space realization corresponding to output of Ho-Kalman algorithm with input $\widehat{G}$. Suppose $\sigma_{n}(L) > 0$ and perturbation obeys
	\begin{equation}
		\| L - \widehat{L} \| \leq \frac{\sigma_{n}(L)}{2}.
	\end{equation}
	Then there exists a unitary matrix $\mathcal{U} \in \mathbb{R}^{n \times n}$ such that
	\begin{equation}
		\| \overline{A} -  \mathcal{U}^\top \widehat{A} \mathcal{U}  \|_{\rm F} \leq \frac{9\sqrt{n}}{\sigma_{n}(L)} \left( \frac{\|L - \widehat{L}\|}{\sigma_{n}(L)} \|H^+\| + \| H^+ - \widehat{H}^+\|\right).
	\end{equation}
\end{theorem}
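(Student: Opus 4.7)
The plan is to propagate an SVD perturbation bound through the balanced Ho-Kalman factorization, and then expand $\overline{A} - \mathcal{U}^\top \widehat{A}\mathcal{U}$ in a three-term telescope. By Assumption~\ref{observable and controllable} the Hankel block $L$ has rank exactly $n$, and by construction $\widehat{L}$ (the best rank-$n$ approximation of $\widehat{H}^-$) has rank at most $n$. The hypothesis $\|L-\widehat{L}\| \leq \sigma_n(L)/2$, combined with Weyl's inequality, gives $\sigma_n(\widehat{L}) \geq \sigma_n(L)/2$, so both factorizations are well-conditioned and a clean spectral gap separates the top $n$ singular values from the remaining zeros. This is the setting in which Wedin's sin-$\Theta$ theorem applies most cleanly.

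First, Wedin's theorem applied to $L = \mathcal{U}_1^0 \Sigma_1^0 (\mathcal{V}_1^0)^\top$ and $\widehat{L} = \mathcal{U}_1 \Sigma_1 \mathcal{V}_1^\top$ delivers a single orthogonal matrix $\mathcal{U} \in \mathbb{R}^{n \times n}$ simultaneously aligning the left and right singular subspaces, with
\begin{equation*}
\max\bigl\{\|\mathcal{U}_1 - \mathcal{U}_1^0 \mathcal{U}\|,\,\|\mathcal{V}_1 - \mathcal{V}_1^0 \mathcal{U}\|\bigr\} \lesssim \frac{\|L-\widehat{L}\|}{\sigma_n(L)}.
\end{equation*}
These subspace bounds lift to the balanced factors $\overline{O} = \mathcal{U}_1^0 (\Sigma_1^0)^{1/2}$, $\widehat{O} = \mathcal{U}_1 \Sigma_1^{1/2}$ (and similarly for $\overline{Q}, \widehat{Q}$) by inserting the intermediate quantity $\mathcal{U}_1\mathcal{U}(\Sigma_1^0)^{1/2}$ and invoking the standard Lipschitz bound for the matrix square root on a spectrum bounded below by $\sigma_n(L)/2$. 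The square-root step is where the amplification $1/\sqrt{\sigma_n(L)}$ enters, yielding
\begin{equation*}
\max\bigl\{\|\overline{O}-\widehat{O}\mathcal{U}\|,\,\|\overline{Q}-\mathcal{U}^\top\widehat{Q}\|\bigr\} \lesssim \frac{\|L-\widehat{L}\|}{\sqrt{\sigma_n(L)}}.
\end{equation*}
Since $\overline{O}, \widehat{O}$ have full column rank and $\overline{Q}, \widehat{Q}$ full row rank, the pseudoinverses have spectral norm at most $\sqrt{2/\sigma_n(L)}$, and the standard pseudoinverse perturbation inequality then gives
\begin{equation*}
\max\bigl\{\|\overline{O}^\dagger - \mathcal{U}^\top \widehat{O}^\dagger\|,\,\|\overline{Q}^\dagger - \widehat{Q}^\dagger \mathcal{U}\|\bigr\} \lesssim \frac{\|L-\widehat{L}\|}{\sigma_n(L)^{3/2}}.
\end{equation*}

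The final step substitutes $\overline{A} = \overline{O}^\dagger H^+ \overline{Q}^\dagger$ and $\widehat{A} = \widehat{O}^\dagger \widehat{H}^+ \widehat{Q}^\dagger$ into the telescoping identity
\begin{equation*}
\begin{aligned}
\overline{A} - \mathcal{U}^\top \widehat{A}\mathcal{U} =\,& \overline{O}^\dagger (H^+ - \widehat{H}^+)\overline{Q}^\dagger \\
&+ (\overline{O}^\dagger - \mathcal{U}^\top \widehat{O}^\dagger)\widehat{H}^+ \overline{Q}^\dagger \\
&+ \mathcal{U}^\top \widehat{O}^\dagger \widehat{H}^+ (\overline{Q}^\dagger - \widehat{Q}^\dagger \mathcal{U}),
\end{aligned}
\end{equation*}
applies submultiplicativity to each summand, uses $\|\widehat{H}^+\| \leq \|H^+\| + \|H^+ - \widehat{H}^+\|$ to absorb higher-order cross terms, and converts the spectral bound to a Frobenius bound via $\|X\|_{\rm F} \leq \sqrt{n}\|X\|$ for $n\times n$ matrices. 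The first summand contributes $\|H^+ - \widehat{H}^+\|/\sigma_n(L)$, while the other two together contribute $\|L-\widehat{L}\|\|H^+\|/\sigma_n(L)^2$, reproducing the structure of the claimed bound. The main obstacle is honest bookkeeping of the numerical constant $9$: Wedin's theorem, the matrix-square-root Lipschitz estimate, and the pseudoinverse perturbation bound each carry their own small constants, and these multiply when chained. Getting exactly $9$ (rather than a larger number) requires organizing the telescope so that each summand invokes only a single one of these tools, and exploiting $\sigma_n(\widehat{L})\geq \sigma_n(L)/2$ with minimal slack; no conceptually new idea is needed beyond this careful accounting.
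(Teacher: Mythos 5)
First, a point of reference: the paper does not prove this statement at all --- it is imported verbatim as Theorem V.2 of \cite{oymak2021revisiting}, so the only proof to compare against is the original one, which rests on the balanced-factorization perturbation lemma of Tu et al.\ (the ``Procrustes flow'' lemma) rather than on Wedin's theorem. Your overall architecture --- Weyl to get $\sigma_n(\widehat{L})\ge\sigma_n(L)/2$, a factor-perturbation bound, a pseudoinverse perturbation bound, the three-term telescope for $\overline{O}^{\dagger}H^{+}\overline{Q}^{\dagger}-\mathcal{U}^{\top}\widehat{O}^{\dagger}\widehat{H}^{+}\widehat{Q}^{\dagger}\mathcal{U}$, and the final $\sqrt{n}$ conversion --- matches the original proof, and the telescoping identity and bookkeeping at the end are correct.

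The gap is in how you reach the key intermediate bound $\max\{\|\overline{O}-\widehat{O}\mathcal{U}\|,\|\overline{Q}-\mathcal{U}^{\top}\widehat{Q}\|\}\lesssim\|L-\widehat{L}\|/\sqrt{\sigma_n(L)}$. Two problems. (i) Wedin's $\sin\Theta$ theorem bounds the angles of the left and right singular subspaces separately; it does not ``deliver a single orthogonal matrix simultaneously aligning'' both. Obtaining one common $\mathcal{U}$ that works for $\mathcal{U}_1$ and $\mathcal{V}_1$ at once is exactly the content of the joint Procrustes argument (minimize over $\mathcal{U}$ of the stacked factor $[\,\overline{O};\overline{Q}^{\top}]$ versus $[\,\widehat{O};\widehat{Q}^{\top}]$), and it is the nontrivial step you are assuming away. (ii) Even granting such a $\mathcal{U}$, your lifting via the intermediate term $\mathcal{U}_1\mathcal{U}(\Sigma_1^0)^{1/2}$ produces a summand of the form $\|\mathcal{U}_1^0-\mathcal{U}_1\mathcal{U}\|\cdot\|(\Sigma_1^0)^{1/2}\|\lesssim\frac{\|L-\widehat{L}\|}{\sigma_n(L)}\sqrt{\|L\|}$, which equals $\frac{\|L-\widehat{L}\|}{\sqrt{\sigma_n(L)}}\cdot\sqrt{\|L\|/\sigma_n(L)}$ --- i.e., it carries an extra square-root condition number of $L$ that the stated theorem does not contain. (There is also a non-commutativity issue in the second summand, since $\mathcal{U}(\Sigma_1^0)^{1/2}\neq(\Sigma_1^0)^{1/2}\mathcal{U}$ in general, so the ``square-root Lipschitz'' step does not apply cleanly.) The Tu et al.\ lemma is designed precisely to avoid this condition-number loss by bounding the Procrustes distance of the balanced factors directly in terms of $\|L-\widehat{L}\|_{\rm F}^2/\sigma_n(L)$; without it (or an equivalent argument), your route proves only a strictly weaker bound, and the constant $9$ is out of reach.
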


\subsection{The proof of Theorem~\ref{pole of single sample trajectory}}

\begin{proof}
    According to Theorem~\ref{state-space realization of single sample trajectory}, there exists a unitary matrix $\mathcal{U} \in \mathbb{R}^{n \times n}$ such that with high probability (same as Theorem 3.1 in~\cite{oymak2021revisiting}), $	\| \mathcal{U}^\top \widehat{A}\mathcal{U} - \overline{A} \| 
	\leq \Delta$.
    On the other hand, according to the triangle inequality, we obtain
	\begin{equation}
	    \|\mathcal{U}^\top \widehat{A}\mathcal{U} \|  \leq \Delta + \| \overline{A}\|.
	\end{equation}
	Based on Corollary~\ref{hausdorff_distance}, it can be obtained that 
	\begin{multline}
			d_{\rm H}(\widehat{A},\overline{A}) = d_{\rm H}(\mathcal{U}^\top \widehat{A}\mathcal{U},  \overline{A})  \\
			\leq  \left( \|\mathcal{U}^\top \widehat{A}\mathcal{U}\| + \|\overline{A}\| \right)^{1-\frac{1}{n}} \| \mathcal{U}^\top \widehat{A}\mathcal{U} - \overline{A}\|^{\frac{1}{n}} \\
            \leq (\Delta + 2\|\overline{A}\|)^{1-\frac{1}{n}} \Delta^{\frac{1}{n}},
	\end{multline}
	where the first equality holds because the unitary matrix transformation does not change the spectrum of the matrix. 
\end{proof}

\subsection{The proof of Theorem~\ref{ill-conditioned}}
Theorem~\ref{ill-conditioned} can, in fact, be regarded as a corollary of Lemma 1 in \cite{sun2023finite}. Nevertheless, for the sake of completeness and self-contained exposition, we provide the detailed proof here.

\begin{proof}
    Note that $H^- = OQ^-$, where $O$ is defined as~\eqref{O}, and $Q^- \triangleq \begin{bmatrix}
		B & AB& \cdots & A^{K_2-1}B
	\end{bmatrix}$.
    Since the pair $(A,C)$ is observable and $(A,B)$ is controllable, and because $K_1, K_2 \geq n$, the matrix $O$ has full column rank $n$, while $Q^-$ has full row rank $n$.

    We first bound $\sigma_{n}(H^-) $ via $O$:
    \begin{multline}
        \sigma_{n}(H^-)  \leq \sigma_{n}(O) \| Q^- \| \leq 4\rho^{-\frac{\left\lfloor \frac{n-1}{2m} \right\rfloor}{\log (2mK_1)} }\|O\| \cdot\| Q^- \| \\
        \leq 
        2\overline{\delta} nK\sqrt{pm} \rho^{
			\frac{\left\lfloor \frac{n-1}{2m} \right\rfloor}{\log (2mK_1)} },
    \end{multline}
    where the second inequality follows from Lemma~\ref{krylov matrix}, and the final one is due to Lemma~\ref{upper bound of H and F}.

    Similarly, bounding $\sigma_{n}(H^-)$ via $Q^-$ yields
    \begin{multline}
        \sigma_{n}(H^-)  \leq \sigma_{n}(Q^-) \| O \| \leq 4\rho^{-\frac{\left\lfloor \frac{n-1}{2m} \right\rfloor}{\log (2pK_2)} }\|Q^-\| \cdot\| O \| \\
        \leq 
        2\overline{\delta} nK\sqrt{pm} \rho^{
			\frac{\left\lfloor \frac{n-1}{2m} \right\rfloor}{\log (2pK_2)} }.
    \end{multline}
\end{proof}

\subsection{The proof of Corollary~\ref{state-space realization of multiple sample trajectories_1}}

\begin{proof}
	By combining the results of Theorem~\ref{Markov parameter of multiple sample trajectories} with Lemma~\ref{lemma_1} and Theorem~\ref{lemma_2}, the proof can be completed.
\end{proof}

\subsection{The proof of Theorem~\ref{pole of multiple sample trajectories}}

\begin{proof}
    According to Corollary~\ref{state-space realization of multiple sample trajectories_1}, if the number of sample trajectories $N \geq 8 p K+4(p+n+m+ 4) \log (3 K/ \delta)$, there exists a unitary matrix $\mathcal{U} \in \mathbb{R}^{n \times n}$ such that with probability at least $1-\delta$, $	\| \mathcal{U}^\top \widehat{A}\mathcal{U} - \overline{A} \| 
	\leq \Delta'$.
    On the other hand, according to the triangle inequality, we obtain
	\begin{equation}
	    \|\mathcal{U}^\top \widehat{A}\mathcal{U} \|  \leq \Delta' + \| \overline{A}\|.
	\end{equation}
	Based on Corollary~\ref{hausdorff_distance}, it can be obtained that 
	\begin{multline}
			d_{\rm H}(\widehat{A},\overline{A}) = d_{\rm H}(\mathcal{U}^\top \widehat{A}\mathcal{U},  \overline{A})  \\
			\leq  \left( \|\mathcal{U}^\top \widehat{A}\mathcal{U}\| + \|\overline{A}\| \right)^{1-\frac{1}{n}} \| \mathcal{U}^\top \widehat{A}\mathcal{U} - \overline{A}\|^{\frac{1}{n}} \\
            \leq (\Delta' + 2\|\overline{A}\|)^{1-\frac{1}{n}} \Delta'^{\frac{1}{n}},
	\end{multline}
	where the first equality holds because the unitary matrix transformation does not change the spectrum of the matrix. 
\end{proof}


\bibliographystyle{elsarticle-num}
\bibliography{ref}

\balance  

\end{document}